\documentclass{ieeeaccess}

\usepackage{cite}
\usepackage{amsmath,amssymb,amsfonts}
\usepackage{algorithmic}
\usepackage{graphicx,color}
\usepackage{textcomp}
\def\BibTeX{{\rm B\kern-.05em{\sc i\kern-.025em b}\kern-.08em
    T\kern-.1667em\lower.7ex\hbox{E}\kern-.125emX}}
\AtBeginDocument{\definecolor{ojcolor}{cmyk}{0.93,0.59,0.15,0.02}}
\def\OJlogo{\vspace{-4pt}\hskip-4pt\includegraphics[height=18pt]{OJcoms.png}}

\usepackage{amsmath,amssymb,amsfonts,mathrsfs, bm}
\usepackage{algorithm}
\usepackage{array}
\usepackage[caption=false,font=normalsize,labelfont={scriptsize,sf},textfont=sf]{subfig}
\usepackage{textcomp}
\usepackage{stfloats}
\usepackage{url}
\usepackage{verbatim}
\usepackage{graphicx}
\usepackage{cite}
\usepackage{booktabs}
\usepackage{multirow}

\usepackage{mathrsfs}
\usepackage{balance}

\usepackage{setspace}
\usepackage{longtable, threeparttable, tabularx}
\usepackage{makecell}

\usepackage{array}
\usepackage{booktabs}
\usepackage{tabularx}

\newcolumntype{Y}{>{\raggedright\arraybackslash}X}

\allowdisplaybreaks[4]

\newtheorem{corollary}{Corollary}

\newtheorem{remark}{Remark}

\usepackage{xurl}
\let\code\path

\usepackage{xurl}
\usepackage[hidelinks,breaklinks=true]{hyperref}

\begin{document}
\history{Date of current version: July 24th, 2026.}
\doi{XXX}


\title{A Survey of Typical-Cell Volume Distributions in Poisson--Voronoi and Poisson--Delaunay Tessellations: Analytical Theory, High-Dimensional Limits, and Wireless Applications}

\author{MINGHUA XIA\authorrefmark{1}, \IEEEmembership{Senior Member, IEEE}, TIAN SHI\authorrefmark{1}, WENKUN WEN\authorrefmark{2} \IEEEmembership{Member, IEEE}}
\address[1]{School of Electronics and Information Technology, Sun Yat-sen University, Guangzhou 510006, China (e-mail: xiamingh@mail.sysu.edu.cn, shit26@mail2.sysu.edu.cn).}
\address[2]{The R\&D Department, Techphant Technologies Company Ltd., Guangzhou 510310, China (email: wenwenkun@techphant.net).}

\markboth
{MINGHUA XIA \headeretal: A Survey of Typical-Cell Volume Distributions in Poisson--Voronoi and Poisson--Delaunay Tessellations}
{MINGHUA XIA \headeretal: A Survey of Typical-Cell Volume Distributions in Poisson--Voronoi and Poisson--Delaunay Tessellations}

\corresp{Corresponding author: Minghua~Xia (e-mail: xiamingh@mail.sysu.edu.cn)}

\begin{abstract}
Random spatial tessellations generated by point processes provide fundamental models for proximity relations, space partitioning, and local geometry in stochastic systems. Among them, Poisson--Voronoi and Poisson--Delaunay tessellations induced by homogeneous Poisson point processes form a canonical dual pair widely used in stochastic geometry, computational geometry, spatial statistics, and wireless-network analysis. In many applications, the typical-cell volume distribution provides an important geometric input for modeling coverage regions, traffic load, clustering structure, connectivity, and other system-level characteristics. Despite extensive study, the literature remains fragmented and analytically asymmetric. For Poisson--Voronoi cell volumes, exact integral representations are available in certain planar settings, while a recent scale--shape factorization provides an exact general-dimensional representation with conditional Gamma structure. However, the normalized shape laws and unbounded facet-count mixture remain implicit, and tractable unconditional closed-form distributions are still unavailable. Practical modeling therefore continues to rely largely on simulations, moment-based characterizations, and empirical approximations. By contrast, Poisson--Delaunay simplex volumes admit directly evaluable, dimension-explicit PDFs, CDFs, and moment formulas derived through Mellin-transform analysis and Meijer's \(G\)-function representations. Motivated by this contrast, this paper presents a structured survey of typical-cell volume distributions in Poisson--Voronoi and Poisson--Delaunay tessellations. We review the principal analytical methodologies, synthesize established exact and approximate results, summarize emerging high-dimensional limit behavior, and discuss representative wireless-network applications, including load modeling, cooperative transmission, and three-dimensional network architectures. We also identify open problems involving unconditional Poisson--Voronoi distributions, non-Poisson spatial models, data-driven geometric inference, and dimension-aware network modeling.
\end{abstract}

\begin{keywords}
Stochastic geometry, Poisson point process, Voronoi tessellation, Delaunay tessellation, typical-cell volume distribution, survey, wireless networks.
\end{keywords}

\maketitle

\section{Introduction}
\label{sec:introduction}

\IEEEPARstart{R}{andom} spatial tessellations provide a mathematically rigorous framework for describing how space is partitioned by interacting nodes, seeds, or geometric generators. As such, they play a central role in stochastic geometry and have found broad applications in computational geometry, spatial statistics, machine learning, biology, materials science, transportation systems, and wireless communications~\cite{Okabe2000,Schneider2008,Chiu2013}. Among the many tessellation models, Voronoi tessellations and their dual Delaunay tessellations are particularly fundamental because they encode two complementary geometric principles: nearest-neighbor partitioning and adjacency connectivity~\cite{pournin2012voronoi}. When the underlying generators follow a homogeneous Poisson point process (PPP), the resulting Poisson--Voronoi tessellation (PVT) and Poisson--Delaunay tessellation (PDT) become canonical random models with strong invariance properties and substantial analytical tractability.

\IEEEpubidadjcol

\begin{figure*}[!t]
    \centering
    \includegraphics[width=0.99\textwidth]{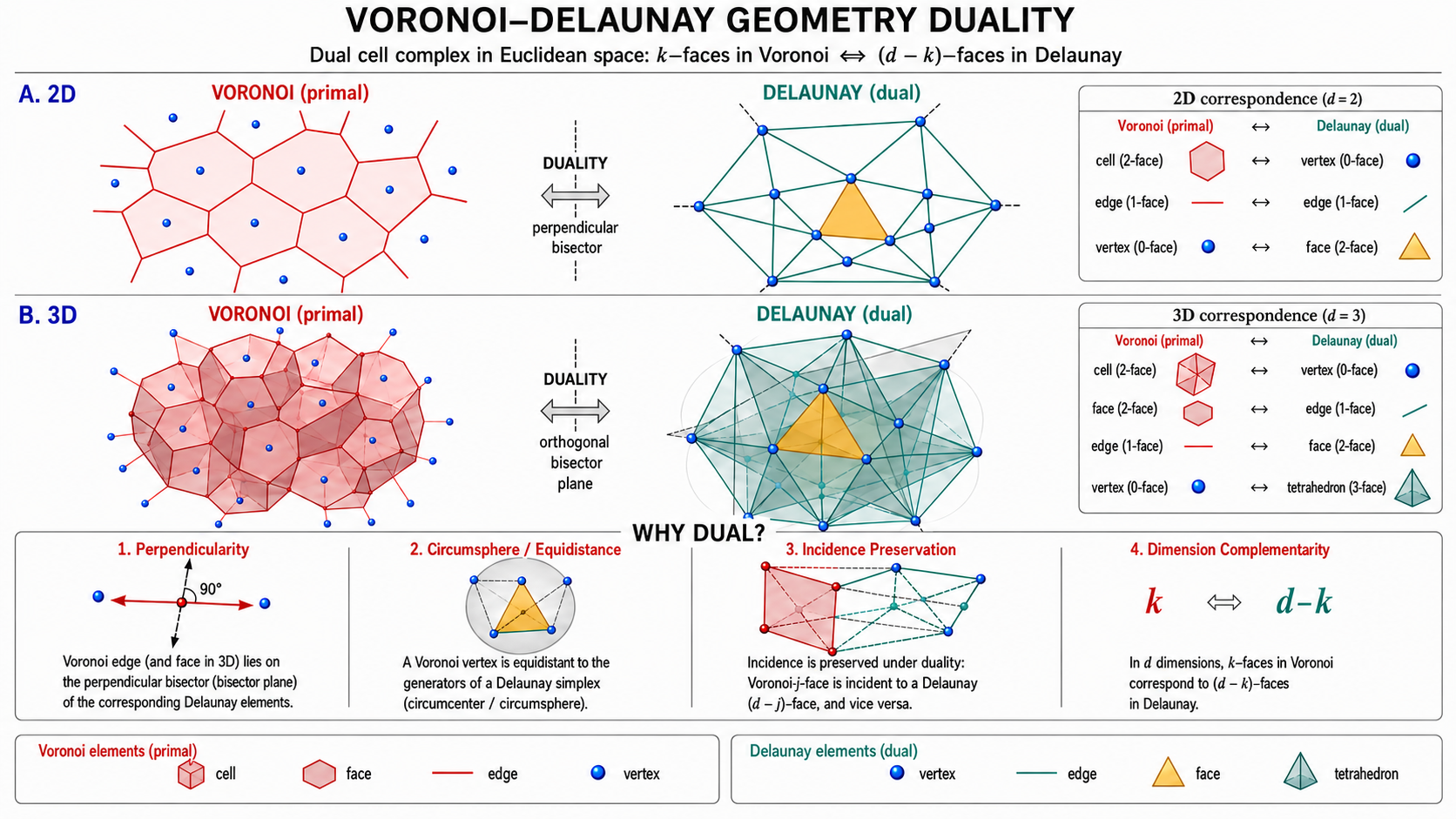}
    \caption{Geometric duality between Voronoi and Delaunay tessellations in two and three dimensions. In two dimensions, Voronoi cells, edges, and vertices correspond respectively to Delaunay vertices, edges, and triangular faces. In three dimensions, Voronoi cells, faces, edges, and vertices correspond respectively to Delaunay vertices, edges, triangular faces, and tetrahedra. All vertices are represented by blue dots, while Voronoi elements are shown in red and Delaunay elements are shown in teal/yellow.}
    \label{fig:voronoi_delaunay_duality}
\end{figure*}

Fig.~\ref{fig:voronoi_delaunay_duality} illustrates the geometric duality between Voronoi and Delaunay tessellations in two and three dimensions. In the planar case, Voronoi cells, edges, and vertices are dual to Delaunay vertices, edges, and triangular faces, respectively. In the spatial case, this correspondence follows the dimension-complementarity rule, whereby a \(k\)-face of the Voronoi tessellation corresponds to a \((d-k)\)-face of the Delaunay tessellation in dimension \(d\). Consequently, in three dimensions, Voronoi cells, faces, edges, and vertices correspond to Delaunay vertices, edges, faces, and tetrahedral cells, respectively. This duality is induced by the perpendicular-bisector and circumsphere properties and preserves the incidence relations between primal and dual elements. It therefore provides a geometric foundation for relating local adjacency structures to the volumetric properties of typical cells in planar and spatial tessellations. For numerical construction, the required Voronoi--Delaunay structures can be generated using standard computational-geometry algorithms, such as the Bowyer--Watson incremental insertion algorithm, Fortune's sweep-line method, and divide-and-conquer strategies~\cite{Hjelle2006}.

These two tessellation families have become especially influential in modern wireless-network analysis \cite{Baccelli2009,Baccelli2010,Haenggi2012}. In cellular systems, Voronoi cells naturally describe coverage regions under nearest-base-station association and are closely related to user load, traffic imbalance, handover behavior, and resource allocation~\cite{6042301}. In contrast, Delaunay structures characterize neighborhood relations among infrastructure nodes and provide useful geometric abstractions for cooperative transmission, clustering, relaying, mesh connectivity, and backhaul design~\cite{8358978}. Beyond conventional terrestrial networks, Voronoi and Delaunay models have also been used in uncrewed aerial vehicle (UAV) systems~\cite{10720695,11346534}, low-altitude networks~\cite{11311373}, satellite constellations~\cite{10531691,11159552}, and emerging three-dimensional communication architectures~\cite{8533634,9151343}.

Within these applications, one of the most fundamental yet comparatively underexplored quantities is the \emph{typical-cell volume distribution}. The volume of a typical Voronoi cell measures the random service region associated with a node, while the volume of a typical Delaunay simplex reflects local spatial density and adjacency geometry. Beyond their mean values, the variability and tail behavior of these volumes provide information about atypically small or large geometric regions that cannot be captured by average-volume analysis alone. These distributions are therefore relevant to many engineering metrics, including coverage-area variability, cell-load fluctuations, cooperation-cluster size, connectivity robustness, and spatial fairness. Although volume statistics are not standalone performance measures, they provide geometric inputs that can be combined with propagation, association, mobility, and resource-allocation models. Consequently, understanding typical-cell volume statistics is important not only from a geometric-probability perspective but also for modeling and optimizing large-scale wireless systems.

\IEEEpubidadjcol

Despite decades of research, the current literature on typical-cell volume distributions remains fragmented and exhibits a pronounced analytical asymmetry. For Poisson--Voronoi tessellations, exact closed-form volume distributions are generally unavailable except in one dimension. In the planar case, Calka derived exact integral representations for the distributions of the principal geometric characteristics of the typical cell, including its area and perimeter~\cite{Calka2003}. These representations are mathematically exact, but they involve conditioning on the number of cell sides, multi-dimensional integrations over neighboring-point distances and angular spacings, and an infinite mixture over the side count for the unconditional area distribution. Very recent work~\cite{xia2026scaleshape} has further established an exact scale--shape factorization of the typical Poisson--Voronoi cell volume, including a conditional Gamma law for the Voronoi flower volume and exact mixture, transform, and moment identities. Nevertheless, a tractable unconditional cell-volume distribution remains unavailable. As a result, practical models still rely largely on Monte Carlo simulations, moment-based characterizations, Gamma-type approximations, and other empirically fitted distributions. By contrast, Poisson--Delaunay simplex volumes admit a substantially richer analytical treatment. Existing works have developed Mellin--Barnes integral representations and, more recently, exact probability density functions (PDFs), cumulative distribution functions (CDFs), and moment expressions in arbitrary dimensions through Mellin-transform and Meijer's $G$-function methods~\cite{Rathie1992,Muche1996,8358978}. This sharp contrast makes the dual pair of PVTs and PDTs particularly interesting from both mathematical and engineering viewpoints.

Meanwhile, the growing use of random tessellations in high-dimensional data analysis, geometric machine learning, random simplicial complexes, and large-scale computational modeling has stimulated interest beyond the conventional two- and three-dimensional settings. High-dimensional Voronoi and Delaunay structures arise naturally in search-space exploration, Bayesian optimization, topological data analysis, geometric interpolation, and machine learning, where cells and simplices characterize local neighborhoods and geometric relationships among data points~\cite{Mishra2023,Zhao2024}. As the ambient dimension increases, however, their combinatorial complexity grows rapidly, making exact construction and numerical integration increasingly challenging~\cite{Sikorski2024}. At the same time, their geometric characteristics and cell volumes may exhibit pronounced scaling and concentration phenomena that differ substantially between the two dual tessellations~\cite{AHL_2021,irlbeck2025pvtd}. Understanding these effects is important for assessing the validity of low-dimensional intuition, developing dimension-aware approximations, and designing computationally tractable models. These trends therefore motivate a broader examination of typical-cell volume laws, with particular emphasis on their dimension-dependent scaling, concentration behavior, asymptotic limits, and numerical characterization.

Motivated by the above observations, this article presents a structured narrative survey of typical-cell volume distributions in Poisson--Voronoi and Poisson--Delaunay tessellations. Rather than treating the two tessellation families separately, we adopt a unified viewpoint centered on their geometric duality, contrasting levels of analytical tractability, and shared relevance to stochastic-geometry-based wireless modeling. The contribution of this survey lies in synthesizing, comparing, and interpreting classical and recent results, including analytical methodologies, high-dimensional behavior, and representative engineering applications. Analytical formulas and limit results are attributed to their original sources, while the derivations presented here connect or reformulate selected results within a common notation and comparative framework. The numerical results serve as illustrations and model comparisons, and the engineering discussion provides application-oriented interpretations of the surveyed geometric results rather than standalone wireless-system performance claims.

The main contributions of this survey are summarized as follows:

\begin{itemize}

\item \emph{Unified review of dual tessellation models:}
We provide a systematic comparison of Poisson--Voronoi and Poisson--Delaunay tessellations from the perspective of typical-cell volume statistics, highlighting their geometric duality and contrasting analytical properties.

\item \emph{Comprehensive synthesis of analytical methodologies:}
We review the principal techniques for characterizing volume distributions, including simulation benchmarking, moment-based analysis, parametric approximations, Mellin-transform methods, and Meijer's-$G$ representations.

\item \emph{Survey of dimension-dependent and high-dimensional results:}
We summarize known exact formulas, asymptotic behaviors, concentration effects, and emerging scaling laws for typical-cell volumes as the spatial dimension increases.

\item \emph{Wireless-network perspectives and design relevance:}
We discuss representative applications in cellular coverage, traffic load modeling, cooperative transmission, connectivity analysis, and three-dimensional wireless networks.

\item \emph{Future research roadmap:}
We identify open directions, including non-Poisson spatial models, dynamic tessellations, machine-learning-assisted geometric inference, and dimension-aware wireless-network design.

\end{itemize}

The remainder of the survey is organized as follows. Section~\ref{sec:method_landscape} reviews the main analytical methodologies for characterizing typical-cell volume distributions. Section~\ref{sec:PDFs_of_volumes} surveys known distributional results for Poisson--Voronoi and Poisson--Delaunay tessellations and derives high-dimensional limits. Section~\ref{sec:results} presents numerical illustrations and approximation comparisons. Section~\ref{sec:application} reviews representative wireless-network applications. Section~\ref{sec:future_research} discusses future opportunities and key open problems. Section~\ref{sec:conclusion} concludes the paper.

\emph{Notation:}
Throughout the paper, the standard Bachmann--Landau asymptotic notation is used.
For two positive functions $f(x)$ and $g(x)$, $f(x)=\mathcal{O}(g(x))$ as $x\to\infty$ if there exist constants $C>0$ and $x_0$ such that $f(x)\le C g(x)$ for all $x\ge x_0$.
We write $f(x)=\Theta(g(x))$ if $c_1 g(x)\le f(x)\le c_2 g(x)$ for some constants $c_1,c_2>0$ and all sufficiently large $x$, and $f(x)=o(g(x))$ if $\lim_{x\to\infty} f(x)/g(x)=0$.
Equality in distribution is denoted by $\stackrel{\mathrm{d}}{=}$, and the expectation and variance operators are denoted by $\mathbb{E}[\cdot]$ and $\operatorname{Var}(\cdot)$, respectively.
Convergence in probability and convergence in distribution are denoted by $\xrightarrow{P}$ and $\xrightarrow{\mathcal{D}}$, respectively.
The Gamma function is denoted by $\Gamma(\cdot)$.

%
\section{Methodological Landscape: From Simulation to Closed Forms}
\label{sec:method_landscape}

The literature on typical-cell volume distributions in PVTs and PDTs spans a wide spectrum of analytical depth. At one end, Monte Carlo simulation provides high-accuracy empirical benchmarks for typical-cell PDFs and moments. At the other end, exact closed-form PDFs and CDFs have been obtained for Poisson--Delaunay simplex volumes by combining raw-moment identities with inverse Mellin transforms and Meijer's-$G$ representations. Between these extremes lie several important methodological families, including parametric distribution fitting, moment-based characterization, exact integral representations, scale--shape factorization, finite-window correction, and asymptotic or limit-theorem-based analysis. This section summarizes these approaches and clarifies their respective advantages and limitations.

\subsection{Voronoi Cells: From Empirical Models to Exact Structural Factorization}

For PVTs, deriving a tractable closed-form PDF or CDF of the typical-cell volume remains a long-standing open problem, unresolved even in the classical low-dimensional cases $d=2$ and $d=3$. Since the early work of Kiang~\cite{kiang1966random} and subsequent simulation-based investigations, including those of Hinde and Miles~\cite{hinde1980monte} and Tanemura~\cite{tanemura2003statistical}, this problem has motivated a broad range of empirical and analytical approaches. Historically, most progress relied on Monte Carlo benchmarking, parametric fitting, and low-order moment characterization. Exact planar integral formulae later provided rigorous conditional representations, while recent work has established an exact scale--shape factorization in arbitrary dimensions~\cite{xia2026scaleshape}. Although this factorization does not yield a directly evaluable unconditional closed-form PDF, it identifies the probabilistic origin of the Gamma structure and reduces the remaining distributional difficulty to normalized shape laws and mixing over the effective-facet count. The principal methodological developments can therefore be grouped as follows.

\subsubsection{Simulation-Based Estimation}

Large-scale Monte Carlo simulations remain the primary means of generating empirical PDFs and benchmarking geometric statistics of typical Poisson--Voronoi cells. The influential Monte Carlo study of Hinde and Miles~\cite{hinde1980monte} helped establish simulation as a central methodology for examining typical-cell distributions. A widely cited later investigation by Tanemura~\cite{tanemura2003statistical} provided detailed numerical statistics for both $d=2$ and $d=3$. Simulation-based estimation can achieve high numerical accuracy and remains indispensable for validating analytical approximations, examining distribution tails, and studying dimensions for which few exact results are available. Nevertheless, it does not yield analytical distributions, and its computational cost increases rapidly with the spatial dimension and the required tail resolution.

\subsubsection{Parametric Distribution Fitting}

A common engineering approach is to approximate the empirical volume distribution using tractable parametric families, most notably Gamma and generalized-Gamma distributions. Influential examples include the Gamma-type models proposed by Weaire \emph{et al.}~\cite{weaire1986distribution} and the one-parameter approximation of Ferenc and N\'eda~\cite{ferenc2007size}. Such models provide compact PDFs and CDFs that can be incorporated into system-level analysis, with parameters adopted from empirical studies or estimated through moment matching and numerical fitting. Recent scale--shape factorization results provide a structural rationale for the recurring effectiveness of Gamma-type models by identifying an exact conditional Gamma component in the underlying Poisson--Voronoi geometry~\cite{xia2026scaleshape}. However, they also show that the cell-volume distribution is generally not itself Gamma. The accuracy of an approximation therefore continues to depend on its parameterization, dimension, fitting criterion, and tail fidelity.

\subsubsection{Moment-Based Approaches}

A complementary methodology characterizes the typical-cell volume through raw moments and moment-matching constraints. Standard references such as Okabe \emph{et al.}~\cite{Okabe2000} and Chiu \emph{et al.}~\cite{Chiu2013} summarize benchmark moment information for Poisson--Voronoi cells. More specific results include planar area moments obtained by Hayen and Quine~\cite{hayen2002areas} and the second volume moment of the three-dimensional typical cell derived by Muche and Ballani~\cite{MucheBallani2011}. These results provide rigorous benchmarks and support the calibration of Gamma-type approximations. Nevertheless, a finite collection of moments does not uniquely reveal the complete distributional structure, and deriving higher-order moments remains analytically difficult.

\subsubsection{Exact Planar Integral Formulae}

Calka~\cite{Calka2003} derived exact integral formulae for the principal geometric characteristics of the typical two-dimensional Poisson--Voronoi cell, including its area and perimeter. The distributions are represented conditionally on the number of cell sides through integrations over neighboring-point distances and angular configurations, while the unconditional law requires an infinite mixture over possible side counts. These formulae provide exact planar representations and reveal the role of the cell's combinatorial structure. However, their direct numerical evaluation is demanding, and they do not provide dimension-explicit closed-form PDFs or a general-dimensional separation of scale and shape.

\subsubsection{Exact Scale--Shape Factorization in Arbitrary Dimensions}

Recent work has established an exact scale--shape factorization of the Palm-typical Poisson--Voronoi cell volume in arbitrary dimensions~\cite{xia2026scaleshape}. The construction conditions on the effective-facet count and uses the Voronoi flower volume as the radial scale. This flower is the exclusion region whose emptiness ensures that a candidate configuration of neighboring nuclei defines the Palm cell. Conditional on the facet count, its volume is exactly Gamma distributed and independent of the normalized cell shape. The cell volume is therefore represented as the product of an independent Gamma scale and a bounded normalized shape factor.

Mixing over the shape factor and facet count yields exact distributional, transform, and moment representations. The factorization also identifies two sources of departure from a single Gamma law: shape variability within a fixed facet-count stratum and mixing across facet counts. It thereby provides a theoretical basis for Gamma-type approximations and a framework for studying the distribution tails. Nevertheless, the normalized shape laws and the unbounded facet-count mixture remain implicit. Thus, the result provides an exact structural characterization, but not a tractable unconditional closed-form PDF or CDF.

\subsubsection{Structural and Asymptotic Results}

Beyond distributional representations, rigorous results have been developed for structural and asymptotic properties of Poisson--Voronoi cells, including geometric functionals, combinatorial statistics, extreme-cell behavior, and limit theorems. Representative contributions include those of Calka~\cite{Calka_2002}, M{\o}ller~\cite{Moller1994}, and Schulte~\cite{SCHULTE2012285}. Recent work has further examined the high-dimensional geometry of the typical Poisson--Voronoi cell and established dimension-dependent limits for its principal geometric characteristics~\cite{irlbeck2025pvtd}. These results provide substantial theoretical insight into scaling and concentration phenomena but generally do not yield tractable finite-dimensional PDFs for the typical-cell volume.

\subsubsection{Finite-Window Analysis and Boundary Correction}

In practical applications and numerical experiments, observations are restricted to bounded domains, where cells intersecting the observation boundary are truncated and can bias empirical volume statistics. Finite-window corrections and boundary-aware models are therefore needed to distinguish typical-cell behavior from artifacts of the observation geometry. A representative example is the bounded-domain analysis of Koufos and Dettmann~\cite{Koufos2019}. Such methods complement the analytical approaches above by supporting reliable empirical estimation and more realistic stochastic-geometry modeling in finite deployment regions.

\subsection{Delaunay Simplices: Integral Forms, Moments, and Exact PDFs/CDFs}

Compared with the Voronoi case, PDTs admit substantially more tractable analytical progress. This is largely because Delaunay cells are simplices with a fixed combinatorial structure (consisting of $d+1$ vertices), whereas Voronoi cells are random polytopes with a random number of faces and complex geometric dependencies. Consequently, the typical simplex volume distribution can be characterized through integral representations, exact raw moments, and closed-form PDFs and CDFs.

\subsubsection{Integral Representations in Low Dimensions}

Early work established Mellin--Barnes-type integral representations for the Poisson--Delaunay simplex-volume distribution. Rathie~\cite{Rathie1992} developed a general integral framework and obtained explicit PDFs for \(d=1\) and \(d=2\). For \(d=3\), Muche~\cite{Muche1996} derived a triple-integral representation. These results are mathematically rigorous but can be cumbersome to evaluate numerically.

\subsubsection{Moment-Based Framework}
A major advantage of the Delaunay case is the availability of closed-form expressions for all raw moments, building on classical results by Miles~\cite{MILES197085}. This moment identity yields all raw moments, supports the determinacy of distributions, and offers a powerful tool for validating simulations and approximations. However, moments alone do not directly yield a closed-form PDF without additional inverse-transform techniques.

\subsubsection{Inverse Mellin Transform and Meijer's-$G$ Closed Forms}

A breakthrough was achieved by Xia and A\"{\i}ssa~\cite{8358978}, who resolved the long-standing absence of a unified and dimension-explicit characterization of the typical Poisson--Delaunay simplex volume distribution. Building on the exact raw-moment expressions, they applied inverse Mellin transforms to derive exact closed-form PDFs and CDFs in terms of Meijer's $G$-functions for arbitrary spatial dimensions. Importantly, the resulting framework retains the same functional form across dimensions, with the dimension dependence entering through the parameters of Meijer's $G$-function, and recovers the classical low-dimensional results for \(d=1\) and \(d=2\) after parameter specialization and simplification. This work therefore establishes an analytical bridge from exact moment identities to unified distributional representations.

\subsubsection{High-Dimensional and Asymptotic Analysis}
Recent work has also investigated the high-dimensional regime \(d\to\infty\), where random tessellations exhibit pronounced geometric concentration and admit sharp scaling laws. A representative contribution is due to Gusakova and Th\"ale~\cite{AHL_2021}, which characterizes the volume of simplices in high-dimensional PDTs. In this regime, simplex-volume functionals often concentrate around deterministic scales, reflecting measure-concentration effects in \(\mathbb{R}^d\). While valuable for theory and for connections to information theory, such results are primarily asymptotic and do not yield explicit finite-$d$ PDFs.

\subsection{Comparative Summary and Discussion}

Table~\ref{tab:comparison} summarizes representative methods for characterizing Poisson--Voronoi and Poisson--Delaunay typical-cell volume distributions, together with their principal advantages and limitations. For Poisson--Voronoi cells, simulation, parametric approximations, and moment-based methods remain the principal computationally tractable approaches. Exact planar integral formulae provide rigorous conditional representations but require high-dimensional integration and an infinite mixture over possible side counts. More recently, the exact scale--shape factorization has revealed a conditional Gamma structure and yielded exact mixture, transform, and moment representations in arbitrary dimensions. Nevertheless, because the normalized shape laws and the unbounded effective-facet-count mixture remain implicit, a tractable unconditional closed-form PDF or CDF is still unavailable.

\begin{table*}[!t]
\caption{Comparison of methods for characterizing typical-cell volume distributions in PVTs and PDTs.}
\label{tab:comparison}
\centering
\scriptsize
\renewcommand{\arraystretch}{1.12}
\setlength{\tabcolsep}{3.5pt}

\begin{tabularx}{\textwidth}{
@{}
>{\raggedright\arraybackslash}p{2.7cm}
>{\raggedright\arraybackslash}p{3.6cm}
Y
Y
@{}
}
\toprule
\textbf{Method Category}
& \textbf{Representative Works}
& \textbf{Advantages}
& \textbf{Limitations} \\
\midrule

\multicolumn{4}{c}{\textbf{Poisson--Voronoi Tessellation}} \\
\midrule

Simulation-based estimation
& Hinde and Miles~\cite{hinde1980monte}; Kumar \emph{et al.}~\cite{Kumar1992}; Lazar \emph{et al.}~\cite{Lazar2013}
& High-accuracy empirical PDFs, moments, and geometric statistics from Monte Carlo simulation
& No closed form; computationally costly in high dimensions \\
\addlinespace[2pt]

Parametric distribution fitting
& Kiang~\cite{kiang1966random}; Weaire \emph{et al.}~\cite{weaire1986distribution}; Tanemura~\cite{tanemura2003statistical}; Ferenc and N\'eda~\cite{ferenc2007size}
& Simple Gamma and generalized-Gamma approximations; widely used in engineering
& Empirical fitting; accuracy depends on parameterization, dimension, and tail behavior \\
\addlinespace[2pt]

Moment-based approaches
& Okabe \emph{et al.}~\cite{Okabe2000}; Hayen and Quine~\cite{hayen2002areas}; Muche and Ballani~\cite{MucheBallani2011}
& Exact or benchmark low-order moments; supports moment matching and parametric modeling
& Higher-order moments are difficult to obtain; does not directly yield the full PDF \\
\addlinespace[2pt]

Exact integral formulae
& Calka~\cite{Calka2003}
& Exact integral representations for the typical 2D cell area, perimeter, and related characteristics
& Multidimensional integrals; no tractable closed-form PDF/CDF \\
\addlinespace[2pt]

Exact scale--shape factorization
& Shi and Xia~\cite{xia2026scaleshape}
& Exact conditional Gamma scale; exact mixture, transform, and moment identities for arbitrary $d$
& Normalized shape-factor laws and mixing over the unbounded effective-facet count prevent a tractable unconditional PDF \\
\addlinespace[2pt]

Analytical and asymptotic results
& Zuyev~\cite{Zuyev1992}; M{\o}ller~\cite{Moller1994}; Calka~\cite{Calka_2002,Calka2005,calka2025voronoitail}; Hug \emph{et al.}~\cite{HugReitznerSchneider2004}; Schulte~\cite{SCHULTE2012285}; Irlbeck \emph{et al.}~\cite{irlbeck2025pvtd}
& Rigorous structural and asymptotic results on radii, large cells, extremes, geometric CLTs, and high-$d$ limits
& Generally do not yield tractable closed-form PDFs for the typical-cell volume \\
\addlinespace[2pt]

Finite-window analysis and boundary correction
& Koufos and Dettmann~\cite{Koufos2019}
& Mitigates finite-domain bias and complements tail asymptotics
& Tailored to bounded observation windows; no universal closed form \\
\midrule

\multicolumn{4}{c}{\textbf{Poisson--Delaunay Tessellation}} \\
\midrule

Integral representations
& Rathie~\cite{Rathie1992}; Muche~\cite{Muche1996}
& Rigorous integral representations, including explicit low-dimensional forms
& Cumbersome expressions; numerically challenging in higher dimensions \\
\addlinespace[2pt]

Moment-based framework
& Miles~\cite{MILES197085}
& Closed-form expressions for all raw moments; supports distribution determinacy
& Does not directly yield the PDF; requires inverse transforms \\
\addlinespace[2pt]

Inverse Mellin transform and Meijer's $G$-function
& Xia and A{\"i}ssa~\cite{8358978}
& Unified closed-form PDF/CDF for arbitrary $d$; supported by symbolic and numerical software
& Requires advanced special functions and careful numerical implementation \\
\addlinespace[2pt]

High-dimensional asymptotic analysis
& Gusakova and Th{\"a}le~\cite{AHL_2021}
& High-dimensional scaling laws and concentration results
& Asymptotic results only; no explicit finite-$d$ PDFs \\
\bottomrule
\end{tabularx}
\end{table*}

In contrast, Poisson--Delaunay simplices admit a more structured analytical treatment. Exact analytical expressions for their raw moments can be combined with inverse Mellin transforms to yield unified closed-form PDFs and CDFs in terms of Meijer's $G$-functions for arbitrary spatial dimensions. Meijer's $G$-function is a general special function defined by a Mellin--Barnes contour integral and encompasses many standard functions as special cases \cite{Mathai93}. Although the resulting formulas involve an advanced special function, they can be evaluated using modern symbolic and numerical software such as MATLAB, thereby providing rigorous and numerically accessible benchmarks for high-dimensional stochastic-geometry modeling and performance analysis. Their accurate numerical evaluation, however, requires careful implementation.

From an engineering perspective, these methods reflect different balances among analytical exactness, geometric interpretability, implementation complexity, and numerical efficiency. Gamma-type approximations offer compact expressions and low computational cost, making them tractable surrogates for many Poisson--Voronoi applications. Exact planar integral formulations preserve detailed geometric structure but require demanding high-dimensional integration. The scale--shape factorization provides an exact general-dimensional structural framework and a principled basis for mixture approximations, although its normalized shape laws and facet-count mixture remain implicit. For Poisson--Delaunay simplices, formulations based on Meijer's $G$-function provide exact distributional characterizations in a unified functional form, with numerical evaluation facilitated by computational tools. These distinct analytical frameworks motivate a closer examination of the scaling laws and structural properties governing typical-cell volume distributions, as developed in the next section.

Fig.~\ref{fig:analytical-pathways} further summarizes the complementary analytical pathways for Poisson--Voronoi and Poisson--Delaunay volume distributions and connects them to their representative wireless-network roles.

%
\section{Typical-Cell Volume Distributions in PVTs and PDTs}
\label{sec:PDFs_of_volumes}

This section reviews probabilistic characterizations of \emph{typical-cell} volume distributions in PVTs and their dual PDTs. We first recall the underlying spatial model and distinguish the Palm constructions of the two typical geometric objects. For Poisson--Voronoi cells, we present the recently developed exact scale--shape factorization, followed by Gamma-type approximations, benchmark moments, and intensity-scaling properties. For Poisson--Delaunay simplices, we review exact dimension-explicit PDF and moment formulas based on Mellin-transform techniques and Meijer's-$G$ representations, together with established high-dimensional scaling and limit results for the logarithmic simplex volume.

\subsection{Poisson Point Process Model and Typical Objects}

A stationary homogeneous PPP $\Phi$ with intensity $\lambda>0$ on $\mathbb{R}^d$ is a simple point process such that, for any Borel set $\mathcal A\subset\mathbb{R}^d$, the random count $N(\mathcal A)\triangleq|\Phi\cap\mathcal A|$ is Poisson distributed with mean $\lambda|\mathcal A|$, and counts in disjoint sets are independent, where $|\cdot|$ denotes the Lebesgue measure. Owing to its translation invariance and complete spatial randomness, the PPP is the canonical model in stochastic geometry and a standard baseline for large-scale wireless-network modeling.

Throughout this paper, ``typical'' cells and simplices are understood in the Palm sense of their corresponding stationary processes. Under the Palm distribution of the underlying PPP $\Phi$, the typical Poisson--Voronoi cell is the Voronoi cell generated by the point conditioned to lie at the origin~\cite{Okabe2000}. We denote this cell by $\mathcal{V}^{\ast}$ and its volume by $V\triangleq|\mathcal{V}^{\ast}|$. In contrast, the typical Poisson--Delaunay simplex is defined under the Palm distribution of the stationary particle process of Delaunay simplices, with the circumcenter used as the center function~\cite{AHL_2021}. Equivalently, this particle process may be represented as the point process of simplex circumcenters marked by the corresponding centered simplices. The typical simplex is obtained by selecting each simplex with equal weight and translating the selected simplex so that its circumcenter is at the origin. We denote its volume by $V_d$. This object is distinct from a simplex associated with a typical Poisson point and from the volume-weighted Delaunay simplex containing the origin.

Despite the geometric duality between Voronoi and Delaunay tessellations, the analytical tractability of their typical-cell volumes differs substantially. Poisson--Voronoi cells are random polytopes with random facet counts and complex geometric dependencies. Although the recently developed scale--shape factorization provides an exact structural and mixture representation in arbitrary dimensions, the normalized shape laws and the unbounded facet-count mixture remain implicit, preventing a tractable unconditional closed-form PDF or CDF. By contrast, Poisson--Delaunay cells are simplices with a fixed combinatorial structure, enabling exact raw-moment formulas and dimension-explicit PDFs and CDFs.

\begin{figure}[!t]
   	\centering
   	\includegraphics[width=0.475\textwidth]{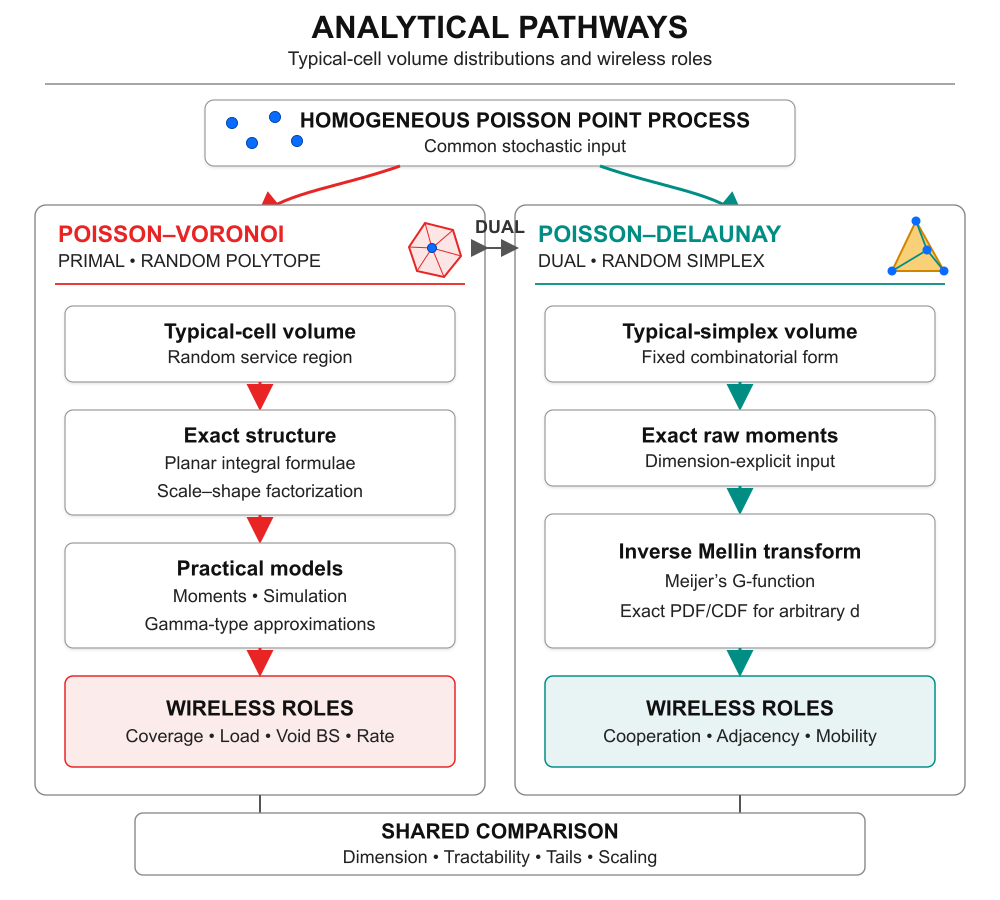}
	\caption{Analytical pathways and representative wireless-network roles of Poisson--Voronoi and Poisson--Delaunay typical-cell volume distributions. The Poisson--Voronoi pathway combines exact structural results with simulations, moments, and Gamma-type approximations, whereas the Poisson--Delaunay pathway derives exact dimension-explicit distributions from raw moments through inverse Mellin transforms and Meijer's $G$-function.}    
	\label{fig:analytical-pathways}
\end{figure}

\subsection{Poisson--Voronoi Typical-Cell Volume Distributions}
\label{subsec:voronoi}

Given a locally finite point set $\Phi=\{x_i\}\subset\mathbb{R}^d$, the Voronoi cell associated with $x_i$ is defined as
\begin{equation}
	\mathcal{V}_i
	= \Bigl\{ y\in\mathbb{R}^d : \|y-x_i\|\le \|y-x_j\|,\ \forall j\neq i\Bigr\}.
\label{Eq-PVT}
\end{equation}
The collection $\mathscr{V}=\{\mathcal{V}_i\}$ forms a partition of $\mathbb{R}^d$ into convex polytopes. When $\Phi$ is a homogeneous PPP, $\mathscr{V}$ is called a \emph{Poisson--Voronoi tessellation (PVT)}.

For $d=1$, the typical Poisson--Voronoi cell is an interval, and its length admits an explicit Gamma (Erlang) law~\cite{Okabe2000}. For $d=2$, exact integral formulae for the typical-cell area and perimeter are available~\cite{Calka2003}. More recently, an exact scale--shape factorization has revealed additional conditional structure in arbitrary dimensions~\cite{xia2026scaleshape}, as discussed below. Nevertheless, a tractable unconditional PDF of the typical-cell volume remains unavailable in general for $d\geq 2$. Practical modeling therefore continues to rely largely on Monte Carlo estimation, Gamma-type fitting, and low-order moment benchmarks.

\subsubsection{Exact Scale--Shape Factorization}
\label{subsubsec:scale_shape}

A recent scale--shape factorization provides an exact structural representation of the Palm-typical Poisson--Voronoi cell volume in arbitrary spatial dimensions~\cite{xia2026scaleshape}. Let \(V\) denote the cell volume, \(K\) its number of effective facets, and \(Q\) the volume of its Voronoi flower. The flower is the exclusion region whose interior must contain no additional Poisson nuclei for a candidate configuration of effective neighboring nuclei to generate the Palm cell. Since both \(V\) and \(Q\) are homogeneous of degree \(d\) in the neighboring-nucleus locations, their ratio is invariant under radial rescaling and depends only on the normalized configuration.

Conditional on \(K=k\), the flower volume \(Q\) follows a Gamma distribution with shape \(k\) and rate \(\lambda\). Let \(A_k=V/Q\) denote the scale-invariant cell-to-flower volume ratio, which captures the aspect of the normalized cell geometry relevant to its volume. Defining \(Y=\lambda V\) and \(Z_k=\lambda Q\), the factorization can then be written as
\begin{equation}
\label{eq:PV_scale_shape}
Y\mid\{K=k\}
\overset{\mathrm d}=A_kZ_k,
\qquad
Z_k\sim\operatorname{Gamma}(k,1).
\end{equation}
Here, \(\operatorname{Gamma}(k,1)\) denotes the Gamma distribution with shape \(k\) and unit rate. Under the conditional law given \(K=k\), \(A_k\) and \(Z_k\) are independent, and \(A_k\) satisfies
\begin{equation}
\label{eq:PV_shape_support}
0<A_k\leq 2^{-d}.
\end{equation}
This support bound follows from the universal geometric inequality \(Q\geq 2^dV\).

Let \(p_{d,k}=\mathbb{P}(K=k)\), and, for each \(k\) with \(p_{d,k}>0\), let \(\eta_{d,k}\) denote the distribution of \(A_k\). Mixing \eqref{eq:PV_scale_shape} over the shape variable and the effective-facet count yields, for \(v>0\),
\begin{align}
F_V(v)
&=
\sum_{k=d+1}^{\infty}p_{d,k}
\int_{0}^{2^{-d}}
\frac{\gamma(k,\lambda v/a)}{\Gamma(k)}
\,\eta_{d,k}(\mathrm da),
\label{eq:PV_scale_shape_cdf}\\
f_V(v)
&=
\sum_{k=d+1}^{\infty}p_{d,k}
\int_{0}^{2^{-d}}
\frac{\lambda^k v^{k-1}}{\Gamma(k)a^k}
\exp\!\left(-\frac{\lambda v}{a}\right)
\,\eta_{d,k}(\mathrm da),
\label{eq:PV_scale_shape_pdf}
\end{align}
where \(\gamma(k,x)\) is the lower incomplete Gamma function, and terms with \(p_{d,k}=0\) are interpreted as zero. Equivalently, conditional on \(K=k\) and \(A_k=a\), \(V\) has a Gamma density with shape \(k\) and rate \(\lambda/a\). The unconditional law is therefore an infinite mixture over both the normalized shape ratio and the effective-facet count.

The factorization explains the effectiveness and limitations of Gamma approximations. Two sources of variability govern possible departures from a single Gamma model: fluctuations of \(A_k\) within each facet-count stratum and mixing across the random effective-facet count. It also yields exact transform and moment identities, thereby connecting the underlying geometric structure to moment-based approximation methods, and provides a configuration-space framework for studying the lower tail. In one dimension, the factorization recovers the exact distribution because the typical cell has two effective facets and a constant cell-to-flower ratio. In the planar case, the normalized configurations admit explicit coordinates that connect the general factorization to classical integral representations. Nevertheless, the shape laws \(\eta_{d,k}\) and the unbounded facet-count mixture remain implicit. Consequently, the result provides exact structural and distributional representations, but not a tractable unconditional closed-form PDF or CDF.

\subsubsection{Gamma-Type Approximations}
\label{subsubsec:voronoi_gamma}

We next consider the two-parameter Gamma (TG) model. A standard approximation models $V$ by a Gamma distribution with shape $a>0$ and rate $b>0$:
\begin{equation}\label{eq:two_para}
	f_{\mathrm{TG}}(v) = \frac{b^a}{\Gamma(a)} v^{a-1} e^{-b v}, \quad v>0.
\end{equation}
Representative empirical fits reported in the literature include:
\begin{itemize}
\item 2D: Weaire \emph{et al.}~\cite{weaire1986distribution} ($a=b=3.61$) and DiCenzo and Wertheim~\cite{dicenzo1989monte} ($a=3.61$, $b=3.57$).
\item 3D: Kiang~\cite{kiang1966random} ($a=b=6$).
\end{itemize}


These parameter values apply to the intensity-normalized volume \(\widetilde V=\lambda V=V/\mathbb{E}[V]\), which is dimensionless and has unit mean because \(\mathbb{E}[V]=1/\lambda\). Equivalently, one may rescale the underlying PPP to unit intensity and express cell volumes in the corresponding unit-intensity volume units.

Then, we move to the three-parameter generalized Gamma (GG) model.
To increase flexibility, especially in fitting the tail behavior, the generalized Gamma family introduces an exponent parameter $c>0$:
\begin{equation}\label{eq:three_para}
	f_{\mathrm{GG}}(v)
	= c \frac{b^{a/c}}{\Gamma(a/c)} v^{a-1} \exp(-b v^c), \quad v>0,
\end{equation}
where $a$ controls the near-origin behavior, $b$ sets the scale, and $c$ modulates tail thickness (with $c=1$ reducing to the TG model).
Representative empirical fits include:
\begin{itemize}
\item 2D: Hinde and Miles~\cite{hinde1980monte} ($a=3.3095$, $b=3.0328$, $c=1.0787$) and
Tanemura~\cite{tanemura2003statistical} ($a=3.315$, $b=3.04011$, $c=1.078$).
\item 3D: Tanemura~\cite{tanemura2003statistical} ($a=4.8065$, $b=4.06342$, $c=1.16391$).
\end{itemize}

Finally, we address the dimension-dependent one-parameter baseline.
Ferenc and N\'eda~\cite{ferenc2007size} proposed a compact dimension-dependent Gamma-type approximation
\begin{equation}\label{eq:ferenc_neda}
	f_d(v)
	= \frac{\left(\frac{3d+1}{2}\right)^{(3d+1)/2}}{\Gamma\!\left(\frac{3d+1}{2}\right)}
	v^{(3d-1)/2} \exp\!\left(-\frac{3d+1}{2} v\right),
\end{equation}
which effectively fixes the parameters as functions of $d$ and provides a parameter-free reference for the \emph{normalized} volume (unit mean) in $d=1,2,3$.

\subsubsection{Moment Characterizations}
\label{subsubsec:voronoi_moments}

For the TG approximation \eqref{eq:two_para}, the first three raw moments are
\begin{align}
\mathbb{E}_{\mathrm{TG}}[V] &= \frac{a}{b}, \label{eq:gamma_m1}\\
\mathbb{E}_{\mathrm{TG}}[V^2] &= \frac{a(a+1)}{b^2}, \label{eq:gamma_m2}\\
\mathbb{E}_{\mathrm{TG}}[V^3] &= \frac{a(a+1)(a+2)}{b^3}. \label{eq:gamma_m3}
\end{align}
For the GG approximation \eqref{eq:three_para}, the corresponding raw moments are
\begin{align}
\mathbb{E}_{\mathrm{GG}}[V] &= \frac{b^{-1/c}\Gamma\!\left(\frac{a+1}{c}\right)}{\Gamma(a/c)}, \label{eq:ggamma_m1}\\
\mathbb{E}_{\mathrm{GG}}[V^2] &= \frac{b^{-2/c}\Gamma\!\left(\frac{a+2}{c}\right)}{\Gamma(a/c)}, \label{eq:ggamma_m2}\\
\mathbb{E}_{\mathrm{GG}}[V^3] &= \frac{b^{-3/c}\Gamma\!\left(\frac{a+3}{c}\right)}{\Gamma(a/c)}. \label{eq:ggamma_m3}
\end{align}
These formulas support moment matching, in which model parameters are calibrated using a small set of benchmark moments, and provide an interpretable basis for comparing the accuracy of different empirical fits.

Commonly used low-order moments (under $\lambda=1$) include:
\begin{itemize}
	\item 2D: $\mathbb{E}[V]=1$, $\mathbb{E}[V^2]\approx 1.2801$, $\mathbb{E}[V^3]\approx 1.9992$~\cite{hayen2002areas}.
	\item 3D: $\mathbb{E}[V]=1$, $\mathbb{E}[V^2]\approx 1.1790$~\cite{MucheBallani2011}, $\mathbb{E}[V^3]\approx 1.6230$\footnote{Estimated from the large-sample 3D Monte Carlo data in \cite{tanemura2003statistical}.}.
\end{itemize}

\subsubsection{Scaling With PPP Intensity}
\label{subsubsec:voronoi_scaling}
If $\Phi$ has intensity $\lambda \ne 1$, then scaling the point process by the factor $\lambda^{1/d}$ maps it to a unit-intensity PPP. Since volume scales as the power $d$ of length, the typical-cell volume satisfies
\begin{equation}
V \stackrel{\mathrm{d}}{=} \lambda^{-1} V_{(1)},
\label{eq:voronoi_scaling_dist}
\end{equation}
where $V_{(1)}$ denotes the typical-cell volume under the intensity $\lambda=1$.

Equivalently, for any moment order $k>0$,
\begin{equation}
\mathbb{E}[V^k] = \lambda^{-k}\,\mathbb{E}\!\left[V_{(1)}^{k}\right].
\label{eq:voronoi_scaling_mom}
\end{equation}
This is the precise basis for the common practice of reporting fitted parameters and moments under $\lambda=1$ and then rescaling to arbitrary intensities.

In practice, moment matching is most effective as a \emph{calibration/validation} tool rather than as a claim of exactness: it allows one to (i) fit TG/GG parameters from $(\mathbb{E}[V],\mathbb{E}[V^2],\mathbb{E}[V^3])$, and (ii) quantify how well different parametric families reproduce benchmark moments and simulated tails. From a theoretical standpoint, the Carleman condition provides a classical sufficient criterion for moment determinacy~\cite{akhiezer2020classical}, which conceptually supports moment-based approximation strategies under suitable regularity.

\begin{remark}[Exact and approximate Voronoi models]
The scale--shape factorization provides an exact structural representation rather than a directly evaluable parametric model. By identifying a conditional Gamma scale and the additional variability arising from normalized shapes and random facet counts, it explains both the recurring effectiveness and the limitations of Gamma-type approximations. Among these approximations, the TG model provides the simplest parametric description, whereas the GG model introduces an additional degree of freedom that generally improves fitting flexibility and tail fidelity. The Ferenc--N\'eda model instead provides a compact dimension-dependent baseline without explicit parameter estimation. The exact factorization and all three approximations are consistent with the PPP intensity-scaling property in \eqref{eq:voronoi_scaling_dist}--\eqref{eq:voronoi_scaling_mom}.
\end{remark}

\subsection{Poisson--Delaunay Simplex-Volume Distributions}
\label{subsec:delaunay}

The geometric dual of the PVT is the PDT. Given a locally finite point set $\Phi=\{x_i\}\subset\mathbb{R}^d$, a $d$-simplex
\begin{equation}
\label{Eq-PDT}
	\sigma = \mathrm{conv}\{x_0,\cdots,x_d\},
\end{equation}
with vertices $\{x_0,\cdots,x_d\}\subset\Phi$ is called a \emph{Delaunay simplex} if its circumsphere contains no other point of $\Phi$ in its interior. The collection of all such simplices forms a tessellation of $\mathbb{R}^d$.
Unlike PVTs, whose typical cells may have different numbers of faces, the fundamental elements of a PDT are uniformly $d$-simplices (triangles in 2D and tetrahedra in 3D).

Compared with the Poisson--Voronoi case, the simplex-volume distribution admits substantially more explicit analytical characterizations. In particular, for the typical $d$-dimensional Delaunay simplex, both the PDF and the complete sequence of raw moments can be expressed in closed form via Mellin transforms, leading to unified Meijer-$G$ representations that hold for all $d \ge 1$~\cite{8358978}. This exact characterization provides a rigorous benchmark for simulation and for assessing approximation accuracy. Also, it facilitates geometry-consistent constructions in practical applications, such as Delaunay-based cooperative transmission in 2D wireless networks~\cite{8976426} and in 3D scenarios~\cite{9151343}.

\subsubsection{Unified Meijer's-$G$ PDF}
\begin{figure*}[!t]
\begin{equation}
f_{V_{d}}(x)
= \frac{A_d}{x} \, \mathrm{G}_{p,\, q}^{m,\, 0}\left[ {B_d} \, x^2 \left \vert \begin{gathered}
\underbrace{\frac{d}{2}+\frac{1}{d}, \, \frac{d}{2}+\frac{2}{d}, \, \cdots, \, \frac{d}{2}+\frac{d-1}{d}}_{(d-1)\ \text{terms}}, \,
\underbrace{\frac{d+1}{2}, \, \frac{d+1}{2}, \, \cdots, \, \frac{d+1}{2}}_{(d-1)\ \text{terms}}
\\
\underbrace{1, \, \frac{3}{2}, \, \cdots, \, \frac{d}{2}}_{(d-1)\ \text{terms}}, \,
\underbrace{\frac{d^2+1}{2(d+1)}, \, \frac{d^2+3}{2(d+1)}, \, \cdots, \, \frac{d^2+2d+1}{2(d+1)}}_{(d+1)\ \text{terms}}
 \end{gathered}\right.\right].
\label{eq:delaunay_pdf}
\end{equation}
\end{figure*}

Let $V_d$ denote the volume of the typical $d$-dimensional Poisson--Delaunay simplex. For a homogeneous PPP with unit intensity, its PDF is given by \eqref{eq:delaunay_pdf}, where $m=2d$, $p=2d-2$, and $q=2d$. The representation is unified in the sense that the same functional form applies to every spatial dimension $d$, with the dimension dependence entering through the parameters of Meijer's $G$-function. The coefficients $A_d$ and $B_d$, which depend only on the dimension $d$, are given in \cite{8358978} as
\begin{align}
	A_d & = \frac{2^{d-\frac{1}{2}} (d+1)^{d^{2}/2} \, \Gamma\!\left(\frac{d^2}{2}\right)\Gamma^{d}\!\left(\frac{d+1}{2}\right)}
                {\pi \, d^{(d^2-1)/2} \, \Gamma(d)\, \Gamma\!\left(\frac{d^2+1}{2}\right) \prod\limits_{i=2}^{d}{\Gamma\!\left(\frac{i}{2}\right)}},  \label{eq:coef_A} \\
	B_d & = \left[\frac{2^{d-1} \pi^{(d-1)/2} d^{d/2} \, \Gamma\!\left(\frac{d+1}{2}\right)}{(d+1)^{(d+1)/2}}\right]^{2}.  \label{eq:coef_B}
\end{align}
Despite involving advanced special functions, these expressions are numerically implementable using standard symbolic/numerical tools, and therefore provide an \emph{exact} reference distribution across dimensions.

\subsubsection{Scaling With PPP Intensity}
\label{sec:delaunay_scaling}
Similar to \eqref{eq:voronoi_scaling_dist}, the scaling invariance of the PPP implies that $V_d$ scales inversely with the point intensity:
\begin{equation}
V_d(\lambda)\ \stackrel{\mathrm{d}}{=}\ \lambda^{-1} V_d(1),
\label{eq:delaunay_scaling_dist}
\end{equation}
which further implies that the PDF satisfies
\begin{equation}
	f_{V_d}(x;\lambda) = \lambda\, f_{V_d}(\lambda x;1),
\label{eq:pdf_scaling}
\end{equation}
while all raw moments scale as $\mathbb{E}[V_d^k]\propto \lambda^{-k}$.

\subsubsection{Closed Forms in Low Dimensions}
Substituting $d=1,2,3$ into \eqref{eq:delaunay_pdf} yields simplified closed forms~\cite{8358978} for all $x>0$:
\begin{itemize}
	\item $d = 1$:
	\begin{equation}
		f_{V_1}(x)=e^{-x}.
	\label{Eq.CaseI-PDF}
	\end{equation}

	\item $d = 2$:
	\begin{equation}
		f_{V_2}(x)
		=  \frac{8}{9}\pi x\,
		K_{\frac{1}{6}}^{2}\!\left(\frac{2\pi x}{3\sqrt{3}}\right).
	\label{Eq.CaseII-PDF}
	\end{equation}

	\item $d = 3$:
	\begin{equation} \hspace{-1em}
		f_{V_3}(x)
		= \frac{560\sqrt{2}}{81\pi x}\,
		\mathrm{G}_{3, \, 5}^{5, \, 0}\!\left(
		\frac{27\pi^2}{16}x^2 \;\middle|\;
		\begin{matrix}
		\frac{11}{6},\, \frac{13}{6},\, 2
		\\[2pt]
		1,\, \frac{3}{2},\, \frac{5}{4},\, \frac{3}{2},\, \frac{7}{4}
		\end{matrix}
		\right).
	\label{Eq.CaseIII-PDF}
	\end{equation}
\end{itemize}
The special cases \eqref{Eq.CaseI-PDF} and \eqref{Eq.CaseII-PDF} are also consistent with earlier low-dimensional representations (e.g., Mellin--Barnes integral forms). Moreover, numerical evaluations of \eqref{Eq.CaseIII-PDF} agree with those obtained from the complex triple-integral representation reported in \cite{Muche1996}. 

\begin{figure*}[!t]
\begin{equation}
\mathbb{E}\!\left[V_d^k\right]
= \frac{\Gamma(d+k) \, \Gamma\!\left(\frac{d^2}{2}\right)\Gamma\!\left(\frac{d^2+dk+k+1}{2}\right)\Gamma^{d-k+1}\!\left(\frac{d+1}{2}\right)}
{(2^{d}\,\pi^{(d-1)/2}\lambda)^k \, \Gamma(d) \, \Gamma\!\left(\frac{d^2+1}{2}\right)\Gamma\!\left(\frac{d^2+dk}{2}\right)\Gamma^{d+1}\!\left(\frac{d+k+1}{2}\right)}
\,\prod_{i=2}^{d+1}{\frac{\Gamma\!\left(\frac{k+i}{2}\right)}{\Gamma\!\left(\frac{i}{2}\right)}}, \quad k=1,2,3,\ldots
\label{eq:delaunay_moment}
\end{equation}
\end{figure*}

\subsubsection{Exact Moments and High-Dimensional Log-Volume Scaling Law}
\label{Section-AsymptoticAnalysis}

For fixed $\lambda>0$, the $k$-th raw moment of the volume of the typical $d$-dimensional Poisson--Delaunay simplex ($V_d$) admits the exact closed-form expression in~\eqref{eq:delaunay_moment}; see~\cite{8358978}. This dimension-explicit identity serves as a rigorous benchmark for numerical validation and provides a basis for deriving high-dimensional scaling laws.

Fix $k\in\mathbb{N}_{+}$ independently of $d$. Applying Stirling's formula to the exact moment identity~\eqref{eq:delaunay_moment} and collecting the dominant terms shows that the $\Theta(d^2\log d)$ contributions cancel, yielding
\begin{equation}
\label{eq:delaunay_moment_scaling_final}
\log \mathbb{E}[V_d^k]
= -\frac{k}{2}\, d\log d + \mathcal{O}(d) - k\log\lambda,
\quad d\to\infty .
\end{equation}

In particular, for $k=1$, \eqref{eq:delaunay_moment_scaling_final} implies
\begin{equation}
\label{eq:delaunay_typical_scale_final}
\mathbb{E}[V_d]
= \lambda^{-1}\, d^{-d/2} e^{\mathcal{O}(d)},
\quad d\to\infty .
\end{equation}
Thus, at leading logarithmic order, the first-moment scale of the typical simplex volume is governed by $d^{-d/2}$, revealing the dominant $d\log d$ exponent underlying its high-dimensional contraction.

Moment asymptotics such as \eqref{eq:delaunay_moment_scaling_final} describe only the growth rate of expectations and do not capture the concentration properties of $\log V_d$. For the typical Poisson--Delaunay simplex, Gusakova and Th{\"a}le \cite{AHL_2021} established a precise high-dimensional limit theory for the logarithmic volume:
\begin{align}
\mathbb{E}[\log V_d]
&=
-\frac{d}{2}\log d-\log\lambda+\mathcal{O}(d), \label{eq:delaunay_typical_scale_ahl2021}\\
\operatorname{Var}(\log V_d)
&=
\frac{1}{2}\log d+\mathcal{O}(1), \label{eq:delaunay_typical_scale_ahl2021var}
\end{align}
together with a central limit theorem and a moderate deviation principle for the standardized log-volume. Intuitively, these relations show that the logarithmic volume is dominated by the term $-\tfrac{d}{2}\log d$, corresponding at leading logarithmic order to a volume scale of $d^{-d/2}$, whereas its random fluctuations increase at a much slower rate.

Specializing the results of Gusakova and Th\"ale~\cite{AHL_2021} to the typical Poisson--Delaunay simplex and reformulating them in terms of a dimension- and intensity-normalized logarithmic volume makes the dominant high-dimensional scaling explicit. Motivated by the leading-order behavior in \eqref{eq:delaunay_typical_scale_ahl2021}, we define
\begin{equation}
\label{eq:def_Zd}
Z_d \triangleq \frac{-\log(\lambda V_d)}{d\log d}.
\end{equation}
By the exact intensity-scaling property of a homogeneous PPP, the distribution of $\lambda V_d$ is independent of $\lambda$. The normalization in \eqref{eq:def_Zd} therefore isolates the principal high-dimensional contraction of the simplex volume. The following corollary summarizes the resulting limiting behavior and fluctuations of $Z_d$.

\begin{corollary}[High-Dimensional Limit Theory]
\label{corollary:Zd_limit_theory}
Let $V_d$ denote the volume of the typical $d$-dimensional Poisson--Delaunay simplex generated by a homogeneous PPP of fixed intensity $\lambda>0$, and let $Z_d$ be defined by \eqref{eq:def_Zd}. Then the following asymptotic properties hold as $d\to\infty$.
\begin{enumerate}
\item Law of Large Numbers (LLN):
\begin{equation}
\label{eq:Zd_lln}
Z_d \xrightarrow{P} \tfrac12 .
\end{equation}

\item Central Limit Theorem (CLT):
\begin{equation}
\label{eq:Zd_clt}
\sqrt{2\log d}\,d\bigl(Z_d-\mathbb{E}[Z_d]\bigr)
\xrightarrow{\mathcal{D}} \mathcal{N}(0,1).
\end{equation}

\item Moderate Deviation Principle (MDP):
For any sequence $a_d\to\infty$ satisfying $a_d=o(\sqrt{\log d})$, the sequence
\[
\frac{Z_d-\mathbb{E}[Z_d]}
{a_d\sqrt{\operatorname{Var}(Z_d)}}
\]
obeys a moderate deviation principle with speed $a_d^2$ and quadratic rate function $I(x)=x^2/2$.

\end{enumerate}
\end{corollary}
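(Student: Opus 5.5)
Writing $L_d\triangleq\log(\lambda V_d)$, the definition \eqref{eq:def_Zd} gives the exact affine relation $Z_d=-L_d/(d\log d)$. By the intensity-scaling property \eqref{eq:delaunay_scaling_dist}, the law of $L_d$ is that of $\log V_d$ at $\lambda=1$, so nothing depends on $\lambda$. The plan is to import the Gusakova--Th\"ale results for $\log V_d$ \cite{AHL_2021}, namely the mean and variance asymptotics \eqref{eq:delaunay_typical_scale_ahl2021}--\eqref{eq:delaunay_typical_scale_ahl2021var} together with their CLT and MDP for the standardized log-volume. Each part of the corollary then follows by transporting these statements through the deterministic affine map $L_d\mapsto Z_d$.

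\textbf{LLN.} From \eqref{eq:delaunay_typical_scale_ahl2021} we get
\[
\mathbb{E}[Z_d]=\frac{\tfrac d2\log d+\mathcal{O}(d)}{d\log d}=\tfrac12+\mathcal{O}(1/\log d)\to\tfrac12 .
\]
From \eqref{eq:delaunay_typical_scale_ahl2021var} we get
\[
\operatorname{Var}(Z_d)=\frac{\tfrac12\log d+\mathcal{O}(1)}{d^2\log^2 d}\to0 .
\]
Chebyshev's inequality then gives $Z_d-\mathbb{E}[Z_d]\xrightarrow{P}0$, and hence \eqref{eq:Zd_lln}.

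\textbf{CLT.} The standardized quantity is invariant under affine maps up to sign:
\[
\frac{Z_d-\mathbb{E}Z_d}{\sqrt{\operatorname{Var}Z_d}}=-\frac{L_d-\mathbb{E}L_d}{\sqrt{\operatorname{Var}L_d}} .
\]
The Gusakova--Th\"ale CLT, together with the symmetry of $\mathcal{N}(0,1)$, therefore gives a standard normal limit for the left-hand side. To reach the stated normalization, note that
\[
\sqrt{\operatorname{Var}Z_d}=\frac{\sqrt{\tfrac12\log d\,(1+\mathcal{O}(1/\log d))}}{d\log d},
\qquad\text{so}\qquad
\sqrt{2\log d}\,d\,\sqrt{\operatorname{Var}Z_d}\to1 .
\]
Slutsky's lemma then converts the standardized statement into \eqref{eq:Zd_clt}.

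\textbf{MDP.} The quantity $(Z_d-\mathbb{E}Z_d)/(a_d\sqrt{\operatorname{Var}Z_d})$ is exactly the negative of the corresponding quantity for $L_d$. An MDP with the even rate function $x^2/2$ is preserved under $x\mapsto -x$, by the contraction principle applied to a continuous bijection. The claim is therefore exactly the Gusakova--Th\"ale MDP, valid in the regime $a_d\to\infty$, $a_d=o(\sqrt{\log d})$.

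\textbf{Main obstacle.} The one step that needs care is checking that \cite{AHL_2021} standardizes by the exact mean and variance of $\log V_d$ and not by their asymptotic proxies. If they standardize by proxies, I would fall back on $\mathcal{O}(1)$ error control of the mean on the scale $\sqrt{\log d}$; the stated $\mathcal{O}(d)$ mean error does not suffice. Here the statement is written with $\mathbb{E}[Z_d]$ and $\operatorname{Var}(Z_d)$, so no proxy is needed and only the variance asymptotic enters, through Slutsky in the CLT step. For the MDP with exact standardization, no asymptotic is needed at all.
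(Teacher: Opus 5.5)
Your proposal is correct and follows the paper's proof essentially step for step. The paper also writes $Z_d$ as an affine function of $\log V_d$ and obtains the LLN from the mean and variance asymptotics via Chebyshev. It gets the CLT from the exactly standardized Gusakova--Th\"ale CLT (their Th.~1.1(ii), with $\mu=-1$ for the typical simplex), combined with sign symmetry and Slutsky, and transfers their MDP (Th.~4.4(iii), with $\varepsilon_d=\sqrt{\log d}$) through the reflection $x\mapsto -x$. The paper's argument uses the cited results with exact mean and variance standardization, so your proxy fallback is not needed.
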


\begin{IEEEproof}
See the Appendix.
\end{IEEEproof}

Corollary~\ref{corollary:Zd_limit_theory} brings the high-dimensional concentration and fluctuation results of Gusakova and Th\"ale~\cite{AHL_2021} into a unified normalization for the typical Poisson--Delaunay simplex volume. In particular, it shows that $Z_d$ converges in probability to $1/2$ and that
\[
\sqrt{\operatorname{Var}(Z_d)}
\sim \frac{1}{d\sqrt{2\log d}}.
\]
Thus, the normalized logarithmic volume becomes increasingly concentrated around its deterministic limit. Equivalently, after removal of the intensity factor, the dominant volume scale at leading logarithmic order is $d^{-d/2}$, whereas the random fluctuations of the logarithmic volume occur on the scale $\sqrt{\log d}$, which is negligible relative to the leading $d\log d$ term.

Here, $d$ denotes the ambient spatial dimension of the geometric model and is not a measure of network density, node population, or cooperation level. The limit $d\to\infty$ should therefore be interpreted as a mathematical asymptotic regime that reveals dimension-dependent geometric structure, rather than as a direct physical limit of a wireless network. From an engineering perspective, the resulting concentration phenomena may provide asymptotic benchmarks for abstract high-dimensional models, distributed-learning architectures, and geometric embeddings. They do not, however, imply corresponding performance trends in increasingly dense or highly cooperative networks. Their relevance to practical two- and three-dimensional deployments must be assessed separately through finite-dimensional analysis.

\begin{remark}[High-Dimensional Asymmetry]
Corollary~\ref{corollary:Zd_limit_theory} shows that the typical Poisson--Delaunay simplex has the volume scale $d^{-d/2}$ at leading logarithmic order. By contrast, established volume-degeneracy results for the typical Poisson--Voronoi cell show that, for fixed intensity $\lambda$, its volume concentrates around its mean $1/\lambda$ as $d\to\infty$~\cite{Alishahi_Sharifitabar_2008,irlbeck2025pvtd}. If $V_d^{\mathrm{PV}}$ denotes its volume, then
\[
\lambda V_d^{\mathrm{PV}}\xrightarrow{P}1
\quad\Longrightarrow\quad
\frac{-\log(\lambda V_d^{\mathrm{PV}})}{d\log d}
\xrightarrow{P}0.
\]

Thus, despite their geometric duality, the two tessellations exhibit different high-dimensional volume scales. The intensity-normalized volume of the typical Poisson--Voronoi cell converges in probability to one, whereas the typical Poisson--Delaunay simplex contracts at the leading logarithmic rate $d^{-d/2}$.

From a modeling perspective, this distinction suggests different high-dimensional approximations for the two structures. Voronoi volumes may be represented by their asymptotically stable intensity-normalized scale, whereas Delaunay simplex volumes require an explicitly dimension-dependent contraction factor. Such approximations may be useful in abstract high-dimensional models of coverage regions, adjacency structures, and multi-node interactions, although their accuracy at finite dimensions must be assessed separately.
\end{remark}

%
\section{Comparative Numerical Analysis}
\label{sec:results}

Building on the analytical characterizations in Section~\ref{sec:PDFs_of_volumes}, we present a comparative numerical study of typical-cell and simplex-volume distributions in PVTs and PDTs. Our objectives are twofold: (i) to extract quantitative, dimension-dependent trends under a unified simulation and normalization framework, and (ii) to assess the accuracy of commonly used tractable approximations. Specifically, we compare low-order raw moments, goodness-of-fit metrics for Gamma-type models of Poisson--Voronoi volumes, and empirical PDFs with particular emphasis on the tail regime. These statistics are directly relevant for stochastic-geometry-based wireless-network analysis, where Voronoi volumes govern coverage/load regions and cell-edge distances, while Delaunay simplices underpin adjacency-based cooperation regions and geometry-consistent clustering.

\subsection{Moment Comparison and Geometric Interpretation}

Table~\ref{tab:moment_summary} reports the first three raw moments of typical Voronoi-cell volumes and typical Delaunay-simplex volumes for $d=1,2,3$. Consistent with the PPP scaling invariance discussed in Sections~\ref{subsubsec:voronoi_scaling} and~\ref{sec:delaunay_scaling}, the $k$-th moment scales as $\lambda^{-k}$ for both tessellations. This scaling serves as a convenient sanity check for the simulation pipeline and the mean normalization used in subsequent fitting comparisons.

In one dimension ($d=1$), both typical-object volumes admit closed-form expressions for their moments. Specifically, for an intensity-$\lambda$ PPP on $\mathbb{R}$, the typical Voronoi cell length equals one half of the sum of two i.i.d. \ exponential random variables with mean $1/\lambda$, whereas the length of a typical Delaunay edge is exponential with mean $1/\lambda$. Consequently,
\begin{equation}
	\mathbb{E}_{\mathrm{Vor}}[V^k]= \frac{(k+1)!}{(2\lambda)^{k}}, \quad
	\mathbb{E}_{\mathrm{Del}}[V^k]=\frac{k!}{\lambda^k},
\end{equation}
which provide exact reference values for validating Monte Carlo estimates in higher dimensions.

\begin{table}[!t]
	\caption{Comparison of the first three raw moments of typical-cell volumes in PVTs and PDTs across dimensions.}
	\centering
	\scriptsize
	\renewcommand{\arraystretch}{1.5}
	\begin{tabular}{ccccc}
		\toprule
		 & $d$  & 1 & 2 & 3\\
		\midrule
		& First moment & $\frac{1}{\lambda}$ & $\frac{1}{\lambda}$ & $\frac{1}{\lambda}$\\
	 Voronoi  	& Second moment & $\frac{1.5}{\lambda^2}$ & $\frac{1.2801}{\lambda^2}$ & $\frac{1.1790}{\lambda^2}$\\
		& Third moment & $\frac{3}{\lambda^3}$ & $\frac{1.9992}{\lambda^3}$ & $\frac{1.6230}{\lambda^3}$\\
		\midrule
		& First moment & $\frac{1}{\lambda}$ & $\frac{0.5}{\lambda}$ & $\frac{0.1478}{\lambda}$\\
	 Delaunay  	& Second moment & $\frac{2}{\lambda^2}$ & $\frac{0.4433}{\lambda^2}$ & $\frac{0.0372}{\lambda^2}$\\
		& Third moment & $\frac{6}{\lambda^3}$ & $\frac{0.5699}{\lambda^3}$ & $\frac{0.0135}{\lambda^3}$\\
		\bottomrule
	\end{tabular}
\label{tab:moment_summary}
\end{table}

Beyond verifying the scaling laws, Table~\ref{tab:moment_summary} reveals a fundamental geometric contrast between Voronoi and Delaunay structures as the dimension increases. By construction, the typical Voronoi-cell volume satisfies $\mathbb{E}[V]=1/\lambda$ for all $d$, reflecting the space-partition property: each point of the PPP ``owns'' an average region of volume $1/\lambda$. In contrast, the mean Poisson--Delaunay simplex volume decreases rapidly with $d$, and its higher-order moments decay even faster. More precisely, at leading logarithmic order, the mean volume shrinks at the rate $d^{-d/2}$, as established in the previous section. This contrast reflects the dual nature of the two tessellations: while Voronoi cells preserve a dimension-invariant average volume, the increasing combinatorial complexity of the Delaunay tessellation is accompanied by a sharp contraction of individual simplices as the dimension grows.

This discrepancy is not a numerical artifact but rather a direct consequence of the empty-circumsphere constraint that defines Delaunay simplices. By \eqref{Eq-PDT}, a Delaunay simplex is a \emph{local proximity object} formed by $d+1$ neighboring PPP points whose circumsphere contains no other points. As $d$ increases, measure-concentration effects in $\mathbb{R}^d$ render neighborhoods increasingly ``thin'' and impose stronger restrictions on feasible empty circumspheres. Consequently, typical Delaunay simplices become progressively more localized, and their volumes shrink super-polynomially fast with $d$, consistent with the asymptotic scaling in \eqref{eq:Zd_clt}.

A complementary indicator is the \emph{relative variability} of cell volumes. Define the squared coefficient of variation (CV) as
\begin{equation}
\mathrm{CV}^2 \triangleq \frac{\mathrm{Var}(V)}{\mathbb{E}[V]^2}
= \frac{\mathbb{E}[V^2]}{\mathbb{E}[V]^2}-1.
\end{equation}
From Table~\ref{tab:moment_summary}, Voronoi cells exhibit $\mathrm{CV}^2\approx 0.2801$ in 2D and $\mathrm{CV}^2\approx 0.1790$ in 3D, indicating moderate dispersion around the mean. For Delaunay simplices, the corresponding values are significantly larger (e.g., $\mathrm{CV}^2\approx 0.7732$ in 2D and $\mathrm{CV}^2\approx 0.7025$ in 3D). This shows that although typical simplex volumes decrease rapidly with dimension, their \emph{relative} fluctuations remain substantial in low dimensions. In higher dimensions, the moment-level asymptotic scaling in \eqref{eq:delaunay_moment_scaling_final} suggests increasing concentration of the log-volume. However, such concentration effects are not yet dominant for $d\le3$, where finite-dimensional fluctuations remain significant relative to the asymptotic scaling.

From a wireless-network perspective, these geometric trends admit a natural interpretation. Voronoi volumes characterize \emph{coverage or load regions}, and their tails govern load imbalance and cell-edge performance~\cite{6042301}. Delaunay simplices, in contrast, often define \emph{adjacency-based cooperation clusters}, e.g., coordinated multi-point (CoMP) groups in cooperative wireless transmission~\cite{8358978,9151343}. The rapid shrinkage of typical Delaunay volumes with increasing dimension therefore implies that such cooperation regions become increasingly compact, favoring localized multi-point association while inherently limiting the spatial footprint of coordination. These geometric observations are consistent with the high-dimensional log-volume limit theory developed earlier: the normalization $Z_d = -\log(\lambda V_d)/(d\log d)$ extracts the universal contraction scale of Delaunay simplices, whose deterministic limit $1/2$ quantifies the balance between combinatorial growth and volumetric shrinkage in high-dimensional PDTs.

\subsection{Quantitative Evaluation of Gamma-Type Approximations for Voronoi Volumes}

We next assess the accuracy of the Gamma-type approximations introduced in Section~\ref{subsubsec:voronoi_gamma}, using Monte Carlo PDFs as ground truth.

To obtain empirical benchmarks, we generate stationary PPP realizations in a bounded observation window $\mathcal{W} \subset\mathbb {R}^d$ with intensity $\lambda$ and construct the associated Voronoi tessellations using standard computational-geometry routines. To mitigate boundary effects, which otherwise bias typical-cell statistics by truncating cells near the boundary $\partial\mathcal W$, we adopt a guard-zone strategy: tessellations are computed over an expanded window $\mathcal W_{\rm ext}\supset\mathcal W$, while statistics are collected only for cells whose nuclei lie in $\mathcal W$ and whose associated cells are fully contained in $\mathcal W_{\rm ext}$. This yields an approximately unbiased empirical sample of typical-cell volumes. All simulations are performed under $\lambda=1$ without loss of generality, and results for general $\lambda$ follow directly from the scaling law in \eqref{eq:voronoi_scaling_dist}.

For each dimension ($d=2,3$), we generate a large number of typical-cell samples and estimate the empirical PDF using a histogram or kernel density estimation (KDE) after mean normalization. Candidate fitted models are compared using three complementary metrics:
\begin{itemize}
\item Mean squared error (MSE): $\mathbb{E}[(\hat f(v)-f(v))^2]$, which penalizes large local deviations;
\item Mean absolute error (MAE): $\mathbb{E}[|\hat f(v)-f(v)|]$, which is more robust to isolated spikes;
\item $R^2$: the coefficient of determination, computed over a discretized evaluation grid $\{v_i\}_{i=1}^{N}$ as
\begin{equation}
R^2 \triangleq 1-\frac{\sum_{i=1}^{N}\left(f_\mathrm{emp}(v_i)-f_\mathrm{fit}(v_i)\right)^2}{\sum_{i=1}^{N}\left(f_\mathrm{emp}(v_i)-\bar f_\mathrm{emp}\right)^2}, 
\end{equation}
with
\begin{equation}
\bar f_\mathrm{emp} \triangleq \frac{1}{N}\sum_{i=1}^{N} f_\mathrm{emp}(v_i),
\end{equation}
where $f_\mathrm{emp}(v)$ denotes the empirical PDF and $f_\mathrm{fit}(v)$ represents the fitted parametric PDF.
\end{itemize}

All metrics are evaluated on a common discretized grid over the support to ensure consistency and comparability. The corresponding quantitative results are reported in Table~\ref{tab:error_metrics}, which also includes the moment-matching parameter estimates derived in this work.

\begin{table*}[!t]
\caption{Quantitative comparison of fitting accuracy for representative Gamma-type approximations of Poisson--Voronoi typical-cell volume distributions under a unified evaluation framework.}
\centering
\scriptsize
\begin{tabular}{cccccc}
\toprule
Dimension & Model & Parameters / Reference & MSE & MAE & $R^2$ \\
\midrule

\multirow{8}{*}{2D}
& One-parameter Gamma
& $d=2$~\cite{ferenc2007size}
& 0.00001946 & 0.00251003 & 0.99975298 \\
\cmidrule(lr){2-6}

& \multirow{3}{*}{Two-parameter Gamma}
& $a=b=3.61$~\cite{weaire1986distribution}
& 0.00002581 & 0.00283045 & 0.99967783 \\
&
& $a=3.61$, $b=3.57$~\cite{dicenzo1989monte}
& 0.00003174 & 0.00304038 & 0.99959723 \\
&
& $a=b=3.5692$ (this work)
& \textbf{0.00001684} & \textbf{0.00234632} & \textbf{0.99978856} \\
\cmidrule(lr){2-6}

& \multirow{3}{*}{Three-parameter Gamma}
& $a=3.3095$, $b=3.0328$, $c=1.0787$~\cite{hinde1980monte}
& 0.00000411 & 0.00117812 & 0.99994769 \\
&
& $a=3.315$, $b=3.04011$, $c=1.078$~\cite{tanemura2003statistical}
& 0.00000380 & 0.00113636 & 0.99995170 \\
&
& $a=3.31605$, $b=3.03689$, $c=1.07871$ (this work)
& \textbf{0.00000350} & \textbf{0.00110210} & \textbf{0.99995544} \\

\midrule

\multirow{6}{*}{3D}	
& One-parameter Gamma
& $d=3$~\cite{ferenc2007size}
& 0.00027711 & 0.00920555 & 0.99713148 \\
\cmidrule(lr){2-6}

& \multirow{2}{*}{Two-parameter Gamma}
& $a=b=6$~\cite{kiang1966random}
& 0.00044188 & 0.01216495 & 0.99589027 \\
&
& $a=b=5.5856$ (this work)
& \textbf{0.00012842} & \textbf{0.00638545} & \textbf{0.99875426} \\
\cmidrule(lr){2-6}

& \multirow{2}{*}{Three-parameter Gamma}
& $a=4.8065$, $b=4.06342$, $c=1.16391$~\cite{tanemura2003statistical}
& 0.00004937 & 0.00465366 & 0.99951284 \\
&
& $a=4.828$, $b=4.069$, $c=1.160$ (this work)
& \textbf{0.00003801} & \textbf{0.00404746} & \textbf{0.99962220} \\

\bottomrule
\end{tabular}
\label{tab:error_metrics}
\end{table*}

Several technically meaningful trends can be observed from Table~\ref{tab:error_metrics}.

\subsubsection{Model Complexity vs. Tail Fidelity}

Across 2D and 3D, the GG models consistently outperform the TG models in fitting accuracy. This improvement is not merely numerical: the additional exponent parameter $c$ in \eqref{eq:three_para} provides the flexibility to match the peak region while simultaneously capturing the empirical tail behavior. In contrast, under unit-mean normalization, the TG family effectively retains only one free parameter, limiting its ability to fit both the central mass and the large-volume tail.

\subsubsection{Increased Fitting Difficulty in 3D}
All models exhibit noticeably larger MSE/MAE in 3D than in 2D, reflecting the fact that the 3D Poisson--Voronoi volume distribution is more skewed and more sensitive to tail mismatch. In particular, the one-parameter Ferenc--N\'eda baseline becomes substantially less accurate in 3D, indicating that dimension-only parameterization cannot adequately capture higher-order distributional features (e.g., skewness and tail decay) as the dimension $d$ increases.

\subsubsection{Dimension-Only Modeling: Contrast Between PVTs and PDTs}
The degradation of the Ferenc--N\'eda approximation in 3D also highlights a fundamental gap between the two dual tessellations. For PVTs, purely dimension-driven closed-form surrogates remain empirical and can lose accuracy as $d$ increases. In sharp contrast, for PDTs the typical simplex-volume PDF admits the exact, dimension-explicit representation in \eqref{eq:delaunay_pdf} for arbitrary $d$, governed solely by $d$ and the PPP intensity. This comparison points to an important open direction: constructing dimension-explicit approximation hierarchies for Poisson--Voronoi typical-cell volume distributions, in which the approximating PDF depends explicitly on $d$ and can be systematically refined through additional moment, Mellin-transform, or geometric constraints. Ideally, such approximations should admit provable accuracy control and correct tail behavior, thereby providing a Voronoi-side counterpart to the exact Delaunay formula \eqref{eq:delaunay_pdf}.

\subsubsection{Effectiveness of Moment-Matched Fitting}
The moment-matched parameterizations derived in this work yield the smallest MSE and MAE in all considered cases. This confirms that incorporating benchmark moments beyond the mean (and, for GG, beyond the variance) materially improves the fit. The gains are particularly pronounced in 3D, where matching the third moment helps mitigate the systematic tail underestimation observed in TG models.

Although the absolute differences in MSE and MAE are numerically small, they are not negligible from an engineering standpoint. In wireless networks, rare but extreme realizations of large Voronoi cells disproportionately affect user-load imbalance, interference aggregation, and outage events. Consequently, even modest mismatches in the tail probability $\mathbb{P}(V>v)$ can propagate into noticeable deviations in performance metrics such as rate coverage probability, cell-edge reliability, and load-dependent base-station activity factors.

\subsection{Empirical PDFs and Tail Discrepancies}

\emph{Simulation setup:}
The numerical experiments were implemented in MATLAB R2025a using the built-in computational-geometry routines \code{voronoiDiagram}, \code{delaunayTriangulation}, \code{voronoin}, and \code{convhull}. The Optimization Toolbox was used for parameter estimation through \texttt{fminunc}. For each realization, a homogeneous PPP of intensity $\lambda=1$ was generated by first drawing the number of nuclei in the extended window from a Poisson distribution with mean $\lambda|\mathcal{W}_{\mathrm{ext}}|$ and then sampling their locations independently and uniformly within that window. The nominal observation windows were $[0,100]^2$ and $[0,20]^3$ for the two- and three-dimensional simulations, respectively. With a guard ratio of $0.6$, the corresponding extended windows were $[-60,160]^2$ and $[-12,32]^3$. To mitigate boundary effects, only bounded cells whose nuclei lay within the nominal observation window and whose vertices remained strictly inside the corresponding extended window were retained.

Across $4000$ realizations, the resulting samples contained $40{,}002{,}672$ two-dimensional cells and $32{,}005{,}269$ three-dimensional cells. For deterministic regeneration, MATLAB's Mersenne Twister generator (\texttt{twister}) was initialized with seed $2025$. Because $\lambda=1$, the retained cell volumes were already expressed in the intensity-normalized form $\lambda V=V$, whose theoretical mean is one; no additional sample-mean normalization was applied. The empirical PDFs were constructed using $400$ histogram bins in each dimension. Model parameters were estimated by moment matching. For the three-parameter generalized Gamma model, a quasi-Newton implementation of \texttt{fminunc} minimized the sum of squared discrepancies between the theoretical and empirical first three raw moments. The two-parameter Gamma model was fitted using the first two moments under the unit-mean normalization. The complete code, numerical data, parameter settings, and reproduction instructions are provided in the Code Availability section.

Fig.~\ref{fig:voronoi_pdf_both} compares the empirical PDFs of normalized Poisson--Voronoi cell volumes with several fitted Gamma-type models, highlighting differences in both the peak and large-volume tail regimes. The distributions exhibit the characteristic right-skewed shape of Poisson--Voronoi cell volumes: most cells are smaller than the mean, while a relatively small number of large cells produce a long right tail. This skewness is consistent with the moment statistics summarized in Table~\ref{tab:moment_summary}, where the squared coefficients of variation indicate moderate volume dispersion.

\begin{figure}[!t]
	\centering
	\subfloat[\scriptsize Areas in 2D]{%
		\includegraphics[width=0.475\textwidth]{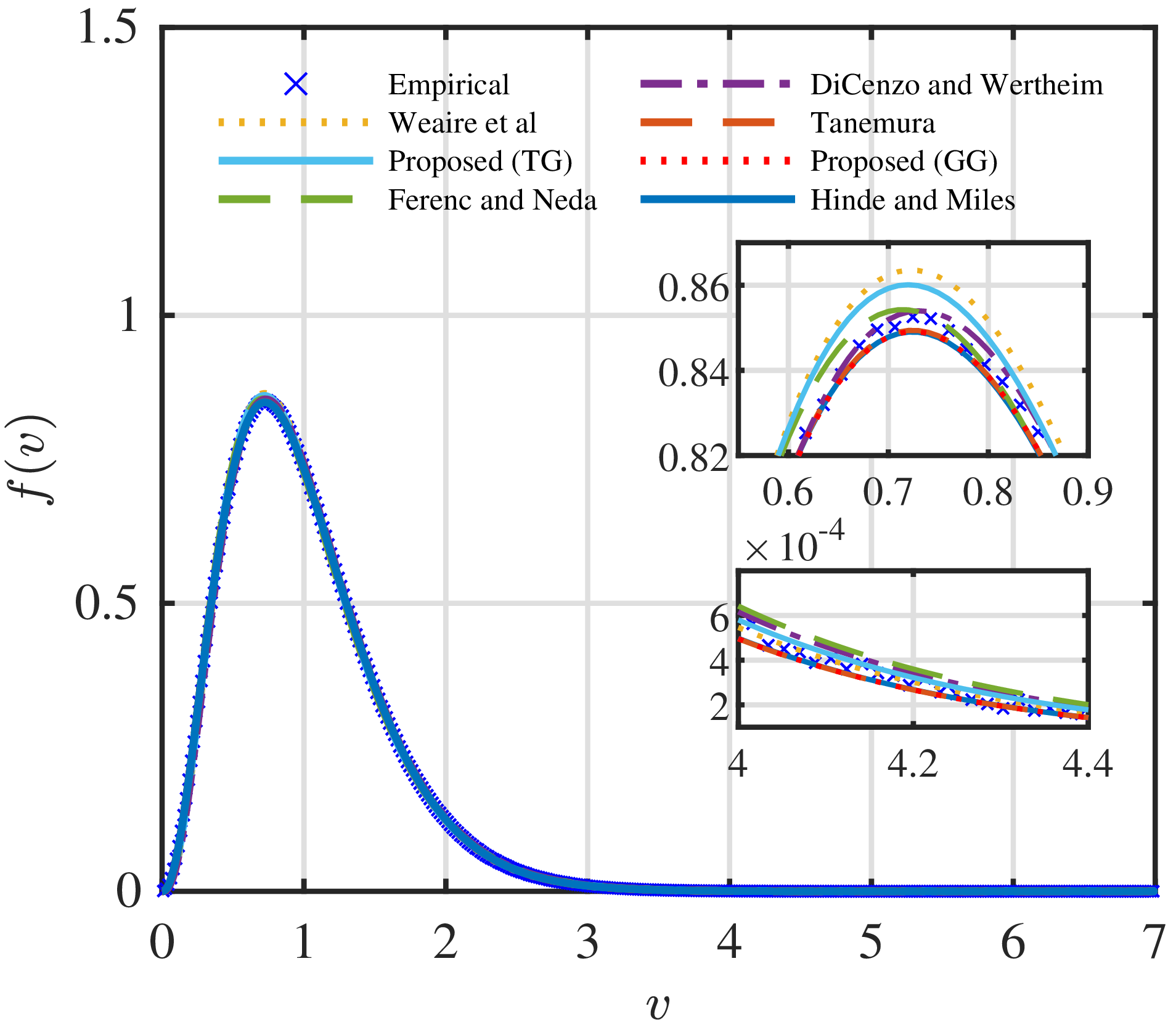}%
		\label{fig:voronoi_pdf_2d}}\\
	\subfloat[\scriptsize Volumes in 3D]{%
		\includegraphics[width=0.475\textwidth]{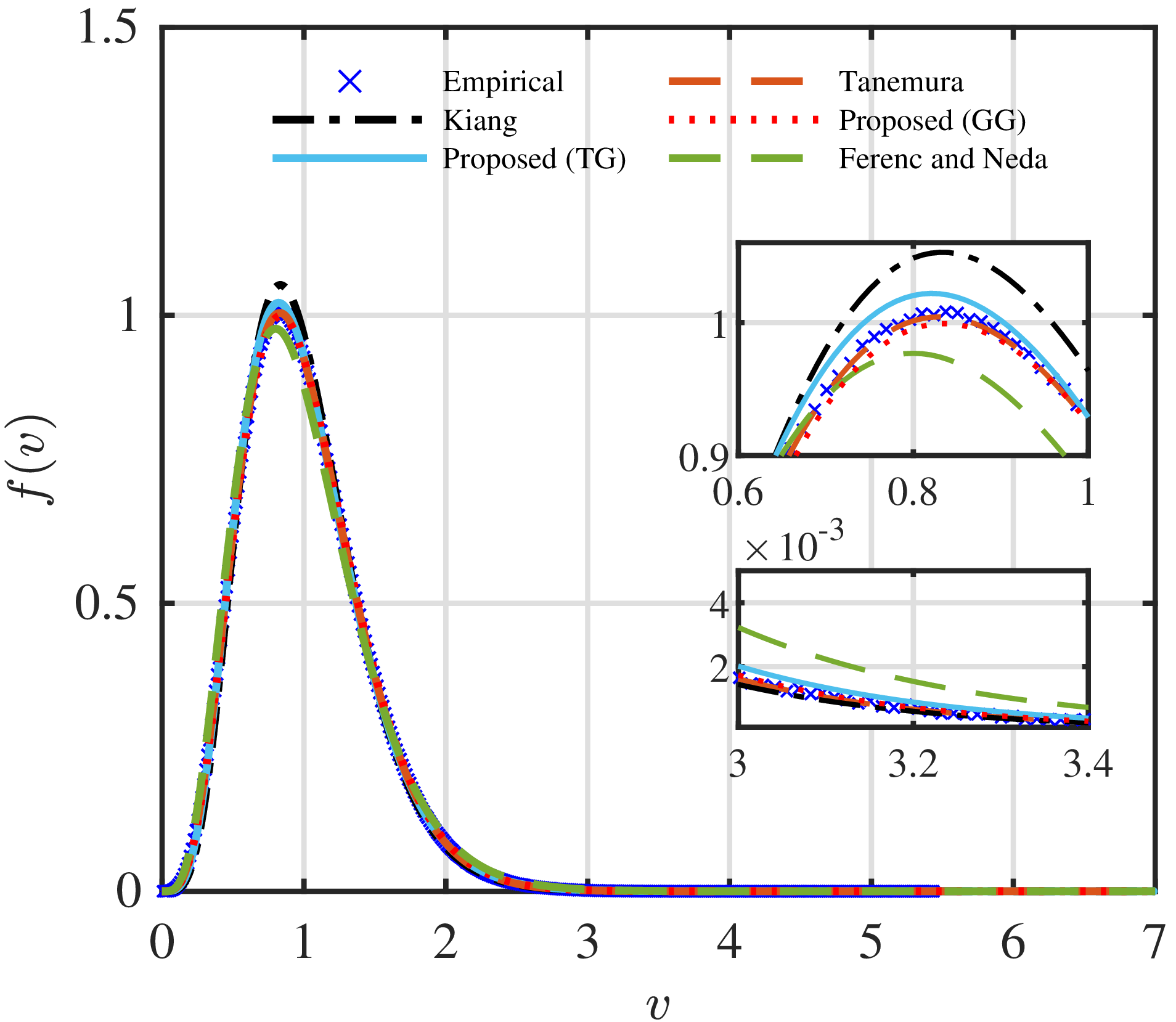}%
		\label{fig:voronoi_pdf_3d}}
	\caption{Empirical and fitted PDFs of normalized Poisson--Voronoi typical-cell volumes: (a) cell areas in 2D and (b) cell volumes in 3D. Markers denote Monte Carlo simulations, while curves correspond to fitted Gamma-type models. The upper inset zooms in on the peak region to compare mode alignment, while the lower inset highlights the large-volume tail, where differences in decay behavior become more apparent.}
	\label{fig:voronoi_pdf_both}
\end{figure}

The peak insets show that most Gamma-type parameterizations reproduce the peak location and central region reasonably well, which helps explain the uniformly high $R^2$ values reported in Table~\ref{tab:error_metrics}. However, the main panels reveal more pronounced discrepancies in the large-volume regime, particularly for $v\gtrsim2\,\mathbb{E}[V]$. The tail insets expose the principal distributional differences. In 2D, the fitted curves exhibit a clear ordering of tail heaviness, indicating that different parameterizations impose noticeably different large-volume decay rates even when their central fits are similar. In 3D, this effect becomes more pronounced: the Ferenc--N\'eda model remains systematically above the empirical markers across the plotted tail range, whereas the moment-matched TG and GG models remain closer to the simulated results. These observations indicate that the tail regime provides a stringent test of model fidelity because small PDF differences can produce materially different probabilities of unusually large Voronoi cells.

From a stochastic-geometry perspective, accurate tail modeling is important because large Voronoi cells represent unusually large service regions and may be associated with heavier user loads. When combined with models of user locations, cell shape, association, propagation, fading, and interference, the tail probability $\mathbb{P}(V>v)$ can therefore influence predictions of load imbalance, rate coverage, and network reliability. Cell volume alone, however, does not determine user-to-serving distances, signal strength, or wireless-system performance. The tail results should accordingly be interpreted as geometric inputs to such analyses rather than as standalone performance predictions.

\subsection{Key Insights for Stochastic-Geometry Modeling}

The empirical comparisons and moment analyses presented above reveal several structural properties of PVTs and PDTs that are directly relevant for stochastic-geometry modeling. The main insights can be summarized as follows.

\subsubsection{Voronoi and Delaunay Typical-Cell Volumes Exhibit Fundamentally Different Dimension Dependence}

For PVTs, the typical-cell volume remains normalized at $\mathbb{E}[V]=1/\lambda$ for all $d$, reflecting the space-partition property that each PPP point ``owns'' an average region of volume $1/\lambda$. In contrast, Poisson--Delaunay simplex volumes decrease rapidly with dimension (Table~\ref{tab:moment_summary}), consistent with the empty-circumsphere constraint and the super-polynomial high-dimensional scaling in \eqref{eq:delaunay_typical_scale_final}. This contrast highlights that Voronoi geometry captures \emph{coverage dominance}, whereas Delaunay geometry captures \emph{local adjacency}, and these two notions behave very differently as the dimension increases.

\subsubsection{For Poisson--Voronoi Modeling, Three-Parameter Models Improve Tail Fidelity}

The exact scale--shape factorization in Section~\ref{subsubsec:scale_shape} provides a structural explanation for both the effectiveness and limitations of Gamma-type approximations. Conditional on the effective-facet count, the cell volume contains an independent Gamma scale component. However, variability in the normalized cell-to-flower ratio and mixing across facet counts prevent the unconditional volume distribution from following a single Gamma law. The additional flexibility of the GG model can therefore be interpreted as an empirical means of accommodating these unresolved sources of geometric variability.

For $d=2,3$, the GG models considered here provide the most reliable approximations of typical-cell volumes, particularly in the large-volume regime relevant to load imbalance and cell-edge reliability. Simpler TG models remain attractive when analytical simplicity is paramount and generally reproduce the peak region well, but their flexibility in capturing tail behavior across dimensions is limited. As reflected in the error hierarchy in Table~\ref{tab:error_metrics} and the tail comparisons in Fig.~\ref{fig:voronoi_pdf_both}, the exponent parameter in the GG model provides improved control of large-volume decay. Consequently, for tail-sensitive metrics such as rate coverage and outage probability, the GG model offers the best simplicity--accuracy tradeoff among the models examined here.

\subsubsection{Poisson--Delaunay Statistics Provide an Exact and Scalable Benchmark for Adjacency-Based Modeling}

Unlike the Voronoi case, Poisson--Delaunay simplex volumes admit exact, dimension-explicit PDFs and CDFs for arbitrary $d$ through Meijer's $G$-function representations; see \eqref{eq:delaunay_pdf}. These volume statistics can therefore serve as exact distributional inputs to stochastic-geometry-based performance models rather than relying on empirical fitting. Geometrically, the rapid contraction of typical simplex volumes characterizes the increasingly localized scale of Delaunay adjacency in the mathematical high-dimensional regime. Its implications for practical cooperation mechanisms must nevertheless be assessed separately in finite-dimensional network models that also incorporate propagation, association, and resource-allocation effects. Moreover, the availability of exact moments and closed forms provides a rigorous benchmark for validating Monte Carlo simulations and developing tractable approximations for more general non-Poisson or dynamic settings.

In summary, Poisson--Voronoi volumes admit an exact scale--shape representation that reveals their conditional Gamma structure, although practical evaluation of their unconditional distribution still relies largely on moment-based and empirical approximations. Poisson--Delaunay simplex volumes, by contrast, admit exact dimension-explicit PDFs, CDFs, and moments. This analytical asymmetry provides a geometric foundation for stochastic-geometry models in wireless communications: Voronoi cells represent coverage and load regions, whereas Delaunay simplices capture adjacency relations underlying cooperation and clustering. The next section discusses how these complementary structures are used in wireless-network analysis and design.

%
\section{Representative Wireless-Network Applications of Voronoi and Delaunay Tessellations}
\label{sec:application}

PVTs and PDTs provide geometric abstractions for stochastic-geometry-based modeling of wireless networks. In particular, PVTs naturally describe coverage regions and load distribution under nearest-BS association, whereas PDTs encode adjacency relations used in modeling cooperative transmission and clustering. The analytical and numerical results presented in Section~\ref{sec:results} show that the typical-cell and typical-simplex volume statistics of PVTs and PDTs exhibit different distributional behaviors. These statistics are relevant to models of user load, rate coverage, BS activity, cooperative association, and network resilience.

Typical-cell and typical-simplex volumes should be interpreted as geometric inputs to wireless-network models rather than as standalone performance measures. Handover behavior, connectivity, interference, rate coverage, and spectral efficiency depend jointly on additional factors, including link distances, cell shape, adjacency relations, propagation loss and fading, association policies, resource allocation, and node mobility. Consequently, volume statistics can support the modeling of these quantities but cannot, by themselves, establish system-performance improvements.

In emerging 5G/6G ultra-dense networks, Poisson--Voronoi cell-volume distributions provide inputs for modeling user load, void-BS probabilities, and load-dependent rate coverage~\cite{9079449,8515110,6042301,7389350}. In UAV-assisted and integrated three-dimensional networks, Voronoi volumes contribute to the characterization of spatial coverage and load allocation~\cite{8437232,11361371}, whereas Delaunay simplex volumes, when combined with adjacency and mobility models, can characterize the geometric scale of cooperation clusters and support analyses of mobility stability and local connectivity~\cite{8358978,9151343,9893878,10720695,11311373,marchand2024cox,10381632,11346534}. These complementary roles make cell- and simplex-volume statistics relevant to terrestrial, aerial, and vertically integrated network architectures.

In the following subsections, we review representative wireless-network applications of both tessellation models and highlight their complementary roles in system-level stochastic-geometry analysis.

\subsection{Voronoi-Based Modeling of Coverage, Load, and Resource Sharing}

The PDF of the typical Poisson--Voronoi cell volume (area in two-dimensional networks) is a fundamental geometric ingredient in stochastic-geometry-based cellular network analysis. Its applications span several interrelated aspects of wireless-network modeling.

\subsubsection{User Load and Users-Per-Cell Distribution}

Under nearest-BS association and a homogeneous PPP user model, the number of users served by a BS is governed by the volume $V$ of its Voronoi cell. Conditional on $V$, the number of users follows a Poisson distribution with mean proportional to $V$, leading to a mixed-Poisson users-per-cell distribution. This framework has been widely used to analyze user-load distributions and traffic heterogeneity in large-scale cellular networks~\cite{6042301,9079449,8515110,6576422}. Such results provide tractable inputs for load estimation, access control, and mobility management.

\subsubsection{Rate Coverage and Load Coupling}

Because radio resources are shared among all active users in the same Voronoi region, the user rate is inherently coupled to cell load. Consequently, the Voronoi volume distribution directly influences the achievable rate distribution across the network. By combining Voronoi volume statistics with signal-to-interference-plus-noise ratio (SINR) distributions, tractable approximations of rate coverage probability and rate distributions can be obtained \cite{6042301}. This methodology has been widely applied to heterogeneous cellular networks~\cite{6497002,6684550} and millimeter-wave systems~\cite{7593259,9625443}, where blockages and directional beamforming reshape effective serving regions and thereby alter load distributions.

\subsubsection{Void Cells, BS Activity, and Energy-Efficient Operation}

The probability that a BS is \emph{void} (i.e., has no associated users) depends directly on the distribution of Voronoi cell volumes, leading to the notion of a BS activity factor. Incorporating void-cell effects yields more accurate interference characterization, particularly in lightly loaded or ultra-dense regimes. As shown in~\cite{7389350,8358978,8515110}, neglecting void probabilities can substantially overestimate interference and distort coverage predictions. Voronoi volume statistics also enable load-aware BS sleep strategies, whereby a BS is deactivated when its associated user count falls below a threshold. The resulting trade-off between reduced interference and coverage degradation has been studied in both homogeneous and heterogeneous deployments~\cite{9676414} and in UAV-assisted cellular systems~\cite{8437232}. Consequently, volume-based load modeling provides a tractable framework for analyzing energy--coverage trade-offs.

\subsubsection{Finite-Region Effects and Security Analysis}

In bounded regions, Voronoi cells are truncated by network boundaries, resulting in location-dependent volume distributions. This boundary effect has been exploited in secure connectivity analysis, where users near the edges of finite regions exhibit distinct geometric statistics and, consequently, different secrecy and connection probabilities~\cite{Koufos2019,5995290}. More broadly, stochastic-geometry frameworks have been widely used for physical-layer security analysis in large-scale PPP-based wireless networks~\cite{6512533,9756566}. Consequently, Voronoi volume statistics play an important role not only in load and coverage modeling but also in evaluating secrecy performance and finite-region network behavior.

Overall, the typical-cell volume distribution of PVTs remains a cornerstone of stochastic-geometry-based wireless modeling, supporting tractable analysis of load distribution, rate coverage, interference mitigation, energy efficiency, and secure connectivity.

These applications illustrate how Voronoi cell volumes provide a fundamental geometric descriptor of coverage regions and traffic load in cellular networks. While Voronoi tessellations primarily support coverage-oriented modeling under nearest-BS association, many emerging wireless architectures increasingly rely on cooperative transmission and topology-aware networking. These mechanisms depend on adjacency relations among neighboring nodes, which are naturally captured by PDTs.

\subsection{Delaunay-Based Modeling of Cooperation and Network Topology}

The typical Poisson--Delaunay simplex volume provides a natural geometric abstraction for analyzing multi-point association, CoMP transmission, mobility-aware networking, and topology robustness. Unlike the Voronoi case, the simplex-volume distribution admits an exact, dimension-explicit characterization via Meijer's-$G$ representations for arbitrary $d$; see~\eqref{eq:delaunay_pdf}. This analytical tractability makes Delaunay-based modeling particularly attractive for structured cooperation and adjacency-driven wireless design.

\subsubsection{Cooperative Set Formation and Multi-BS Association}

As the geometric dual of PVTs, PDTs naturally encode BS adjacency: each simplex corresponds to a minimal cluster of $d+1$ neighboring BSs satisfying the empty-circumsphere condition. The simplex volume provides a geometric measure of the spatial scale of such adjacency clusters and reflects the local spatial density of cooperating BSs. Leveraging this structure, Poisson--Delaunay-based CoMP transmission frameworks have been proposed for planar cellular systems~\cite{8358978,8976426} and extended to air-to-air networks~\cite{9151343}, integrated ground-to-air~\cite{9893878} networks, and vertical heterogeneous networks \cite{11311373}. In these settings, simplex geometry provides a principled mechanism for systematically defining cooperative sets that are both spatially consistent and analytically tractable.

\subsubsection{Coverage and Rate Analysis under Cooperative Clustering}

The gains achievable through CoMP transmission depend critically on the spatial configuration of cooperating BSs. Delaunay simplices provide a natural representation of local cooperation geometry. By combining simplex-volume statistics with inter-point distance distributions and SINR analysis, tractable expressions for coverage probability and rate performance under adjacency-based cooperation can be derived. This approach has been applied in terrestrial networks~\cite{8976426}, low-altitude aerial systems~\cite{9893878,9151343}, and coordinated wireless power transfer architectures~\cite{9775573}, demonstrating that geometry-consistent clustering can significantly improve spectral efficiency.

\subsubsection{Mobility, Handoff, and Interference Coordination}

A key advantage of Delaunay-based association is its geometric consistency: cooperation sets are determined by stable adjacency relations rather than instantaneous received power. This reduces frequent re-association under mobility and improves network stability. By exploiting Delaunay geometry and simplex-volume statistics, analytical characterizations of handoff behavior and interference coordination can be derived, thereby improving robustness in UAV-enabled networks~\cite{9893878,10720695} and air-to-air systems~\cite{9151343}.

\subsubsection{Topology Control, Connectivity, and Network Resilience}

Because Delaunay simplices explicitly encode local neighbor relations among nodes, they provide a natural geometric foundation for topology-aware coordination in wireless networks. In particular, simplex-based adjacency structures enable principled formation of cooperative clusters and support coordinated scheduling, power control, and distributed resource management. Compared with Voronoi abstractions that primarily describe coverage dominance, Delaunay simplices more directly capture the geometric footprint of cooperative groups, thereby improving spectral efficiency and reducing transmit power in coordination-enabled systems. Geometry-aware clustering based on Delaunay adjacency has therefore been explored in several emerging wireless architectures, including CoMP-enabled aerial networks~\cite{9151343}, vertically integrated heterogeneous systems~\cite{11311373}, and reconfigurable intelligent surfaces (RIS)-assisted three-dimensional cellular deployments~\cite{11361371}.

Beyond coordination and clustering, Delaunay geometry also provides a principled framework for studying connectivity and resilience in spatial wireless networks. The Poisson--Delaunay graph forms a proximity-based backbone whose connectivity and percolation properties have been rigorously analyzed in stochastic spatial models~\cite{marchand2024cox}. In engineering applications, similar proximity structures naturally describe distributed neighbor relations, such as those arising in UAV swarm formation and tracking~\cite{10381632} and UAV-assisted IoT coverage recovery under dynamic deployments~\cite{11346534}. While global connectivity is primarily governed by the edge structure of the Delaunay graph, simplex-volume statistics provide complementary geometric insight. Large simplex volumes indicate locally sparse node configurations that may correspond to weakly connected regions of the Delaunay graph and potential geometric bottlenecks, whereas smaller volumes correspond to dense local neighborhoods. Consequently, simplex-volume distributions can serve as useful indicators for assessing robustness, fragmentation risk, and topology stability in Delaunay-based wireless networks.

In summary, the typical-cell volume statistics of PVTs and PDTs provide complementary geometric abstractions for stochastic-geometry-based wireless-network modeling. Voronoi volumes primarily characterize coverage regions, load distribution, and interference activity under nearest-BS association, whereas Delaunay simplices encode adjacency relations that enable principled modeling of cooperative transmission, clustering, mobility management, and topology resilience. The availability of exact, dimension-explicit simplex-volume distributions further enhances the analytical appeal of PDT-based abstractions for adjacency-driven wireless design. At the same time, the absence of comparable dimension-explicit characterizations for Voronoi volumes highlights an important theoretical gap that motivates the research directions discussed in the next section.

\section{Future Research Opportunities}
\label{sec:future_research}

The wireless-network applications reviewed in Section~\ref{sec:application} highlight the central role of typical-cell volume statistics in stochastic-geometry-based system modeling. Despite substantial progress in understanding these geometric quantities, several fundamental challenges remain. In particular, the preceding sections showed that Poisson--Delaunay simplex volumes admit directly evaluable, dimension-explicit characterizations, whereas the recently developed scale--shape factorization provides an exact but still implicit representation of Poisson--Voronoi volumes. The remaining challenge is to convert this structural representation into tractable unconditional distributions or systematically controlled approximations. Bridging this gap, extending the theory beyond homogeneous PPPs, and adapting tessellation models to emerging wireless architectures therefore represent important directions for future research.

The following research directions can be broadly organized into three complementary themes.
First, \emph{analytical directions} focus on deepening the mathematical understanding of Voronoi and Delaunay volume distributions, including closed-form characterizations, tail asymptotics, and high-dimensional limit behavior.
Second, \emph{modeling directions} aim to extend tessellation-based frameworks beyond idealized Poisson settings by incorporating spatial correlations, heterogeneous intensities, and temporal dynamics that more accurately reflect practical wireless deployments.
Third, \emph{computational directions} emphasize scalable evaluation and approximation techniques, including special-function asymptotics and hybrid analytic--data-driven surrogates that preserve stochastic-geometry invariances.
Together, these themes define a coherent research agenda to advance the theory of random spatial tessellations and to enable more accurate and computationally tractable stochastic-geometry models for next-generation wireless networks.

\subsection{Closed-Form Distributions for Poisson--Voronoi Cells}

A long-standing open problem in stochastic geometry is the absence of a tractable unconditional closed-form PDF for the typical Poisson--Voronoi cell volume, even in the fundamental cases \(d=2\) and \(d=3\). Exact planar integral formulae are available in \(d=2\)~\cite{Calka2003}, and the recent scale--shape factorization provides an exact structural and mixture representation in arbitrary dimensions~\cite{xia2026scaleshape}. Conditional on the effective-facet count, it separates an independent Gamma scale from a bounded normalized shape factor. Nevertheless, the conditional shape laws and the unbounded facet-count mixture remain implicit, preventing direct evaluation of the unconditional distribution.

Several technically promising directions follow from this representation:
\begin{itemize}
	\item \emph{Normalized shape laws:} Characterizing or accurately approximating the conditional distributions of the cell-to-flower volume ratio for fixed effective-facet counts;
	\item \emph{Controlled facet-count truncation:} Developing finite-mixture approximations with explicit error bounds for the neglected higher-facet contributions;
	\item \emph{Tail asymptotics:} Establishing unconditional lower- and upper-tail behavior by combining configuration-space analysis with uniform control across facet counts;
	\item \emph{Moment and transform inversion:} Using the exact conditional identities together with stable inversion techniques to construct systematically improvable distributional approximations.
\end{itemize}

Even partial analytical advances, particularly improved tail approximations, would significantly enhance load-aware wireless-network models.

\subsection{Dimension-Explicit Distributions and Systematic Approximations}

Sections~\ref{sec:PDFs_of_volumes} and~\ref{sec:results} highlight a sharp analytical contrast. Poisson--Delaunay simplex volumes admit directly evaluable, dimension-explicit PDFs and CDFs for arbitrary $d$ through Meijer's $G$-function representations; see~\eqref{eq:delaunay_pdf}. For Poisson--Voronoi volumes, the exact scale--shape factorization is valid in arbitrary dimensions but remains implicit through the conditional shape laws and effective-facet-count mixture. Practical evaluation therefore continues to rely primarily on simulation benchmarks and Gamma-type approximations. Moreover, dimension-only surrogates such as the Ferenc--N\'eda model exhibit noticeable degradation in $d=3$, particularly in the tail regime.

A key open problem is therefore to develop either exact \emph{dimension-explicit} distributions, if feasible, or \emph{systematically improvable} approximations for the typical Poisson--Voronoi volume that remain accurate across dimensions. Ideally, such approximations should satisfy:
\begin{itemize}
	\item \emph{Uniform validity across $d$:} reliable accuracy from low to moderate dimensions;
	\item \emph{Exact PPP scaling:} preservation of the scaling law in \eqref{eq:voronoi_scaling_dist};
	\item \emph{Moment consistency:} agreement with benchmark moments by construction;
	\item \emph{Tail fidelity:} accurate asymptotic decay of $\mathbb{P}(V>v)$ and $\mathbb{P}(V<v)$.
\end{itemize}

Promising approaches include finite-mixture approximations derived from the scale--shape factorization, parametric or data-driven models for the conditional shape laws, Mellin-transform-based approximations, maximum-entropy constructions under moment constraints, and hybrid analytic--data-driven surrogates that enforce stochastic-geometry invariances while providing quantifiable error control.

\subsection{Finite-Dimensional Accuracy and Practical Scaling Laws}

The high-dimensional limit theory summarized in Corollary~\ref{corollary:Zd_limit_theory} establishes strong concentration of Poisson--Delaunay simplex volumes under logarithmic normalization. However, most practical wireless-network applications operate in low or moderate dimensions ($d=2$ or $d=3$), where asymptotic limits alone do not fully capture finite-dimensional behavior.

An important direction, therefore, is to bridge the gap between asymptotic theory and practically relevant dimensions. Key research topics include:
\begin{itemize}
\item \emph{Finite-$d$ error bounds:} Quantifying convergence rates of high-dimensional approximations;
\item \emph{Uniform approximations:} Developing analytic models that remain accurate across a range of dimensions;
\item \emph{Finite-dimensional geometric benchmarks:} Translating log-volume concentration into controlled approximations for volume-dependent inputs to coverage, load, and cooperative-cluster models.
\end{itemize}

Such results would substantially enhance the practical applicability of high-dimensional stochastic-geometry theory.

Beyond these foundational analytical questions, an equally important challenge is extending tessellation-based models to the more realistic spatial environments encountered in practical wireless systems.

\subsection{Extension Beyond Homogeneous PPPs}

Most existing results rely critically on the independence and stationarity properties of homogeneous PPPs. Real wireless deployments, however, often exhibit spatial repulsion (planned BS layouts), clustering (user hotspots), or spatial inhomogeneity (urban--rural gradients). Extending typical-cell volume theory beyond Poisson assumptions, therefore, represents an important and technically demanding research direction.

The principal obstacle is the loss of Palm-calculus tractability: conditional independence structures that simplify PPP analysis no longer hold. Promising research directions include:
\begin{itemize}
	\item \emph{Clustered models:} Cox, Thomas, and Neyman--Scott processes capturing hotspot behavior~\cite{Haenggi2012,Chiu2013};
	\item \emph{Repulsive models:} determinantal or Mat\'ern hard-core processes reflecting planned deployments~\cite{Haenggi2012};
	\item \emph{Inhomogeneous models:} spatially varying intensities relevant to heterogeneous networks~\cite{Chiu2013};
	\item \emph{Deterministic and aperiodic models:} Delone sets, lattices, and quasicrystalline point patterns that capture geometrically regular yet non-Poisson spatial organization~\cite{Lagarias1999,BaakeGrimm2013}.
\end{itemize}

Developing approximation frameworks that preserve analytical tractability while capturing spatial correlations remains a central open challenge.

\subsection{Dynamic and Time-Varying Tessellations}

In mobile scenarios such as vehicular networks, UAV-assisted systems, and mobile IoT deployments, the underlying point process evolves over time, inducing time-varying Voronoi and Delaunay tessellations. The typical-cell volume therefore becomes a stochastic process $\{V(t)\}$ rather than a static random variable.

Understanding the temporal behavior of such tessellations remains largely unexplored. Important questions include:
\begin{itemize}
	\item \emph{Temporal correlation:} Characterizing $\mathrm{Corr}(V(t),V(t+\tau))$ under realistic mobility models;
	\item \emph{Handoff dynamics:} Relating Voronoi-volume fluctuations to boundary crossings and handoff rates;
	\item \emph{Simplex persistence:} Quantifying the temporal stability of Delaunay simplices and its impact on CoMP transmission.
\end{itemize}

Progress in this direction would significantly improve the realism of tessellation-based wireless models.

\subsection{Asymptotic Analysis of Meijer's-$G$ Representations}

The exact PDF of the Poisson--Delaunay simplex volume admits a unified Meijer's-$G$ representation; see~\eqref{eq:delaunay_pdf}. While this expression provides a complete analytic description for all dimensions, its asymptotic behavior for large $d$ has not yet been systematically studied from the perspective of special-function analysis.

Future research could explore:
\begin{itemize}
\item \emph{Saddle-point and pole analysis:} Identifying dominant poles or saddle points governing the large-$d$ asymptotics of the Mellin transform;
\item \emph{Asymptotic evaluation of Meijer's-$G$ functions:} Developing systematic approximations for moderate and high dimensions;
\item \emph{Connections to geometric probability:} Linking special-function asymptotics with concentration phenomena in high-dimensional random tessellations.
\end{itemize}

Such analyses may reveal more clearly how scaling laws in high dimensions emerge directly from the exact representation in terms of special functions.

Finally, the increasing scale and complexity of modern wireless systems motivate the development of efficient computational and data-driven tools that complement traditional analytical approaches.

\subsection{Hybrid Analytic--Data-Driven Modeling}

The implicit nature of the exact Poisson--Voronoi scale--shape representation and the numerical complexity of Meijer's $G$-functions motivate hybrid analytic--data-driven approaches. Unlike purely black-box machine-learning methods, geometry-aware surrogates can preserve known stochastic-geometry structure, including PPP intensity scaling, moment identities, positivity, support bounds, and conditional mixture structure.

Promising directions include:
\begin{itemize}
\item Learning conditional shape-ratio distributions subject to the support bound in \eqref{eq:PV_shape_support};
\item Constructing finite facet-count mixtures that preserve exact moments and intensity scaling;
\item Developing normalizing-flow models that respect positivity and geometric constraints;
\item Using neural-network surrogates to accelerate the evaluation of Meijer's $G$-function PDFs in moderate and high dimensions.
\end{itemize}

Balancing computational efficiency with theoretical consistency remains the key challenge, ensuring that learned models respect geometric probability structure rather than merely fitting simulation data.

In summary, advancing the theory of Poisson--Voronoi and Poisson--Delaunay typical-cell volume distributions requires progress along several complementary directions. On the theoretical side, the exact Poisson--Voronoi scale--shape factorization provides a new foundation for characterizing conditional shape laws, controlling facet-count mixtures, and deriving unconditional tail asymptotics and dimension-explicit approximations. On the modeling side, extending tessellation frameworks beyond homogeneous PPPs and incorporating temporal dynamics will be essential for representing the spatial and mobility patterns of emerging wireless systems. From a computational perspective, hybrid analytic--data-driven approaches offer a promising means of translating rigorous structural results into scalable evaluation and approximation tools. Progress in these directions will deepen our understanding of random spatial tessellations and enable more accurate performance analysis for next-generation wireless networks operating in complex, dynamic environments.

Based on the preceding discussion, several representative open problems can be summarized as follows.
\begin{remark}[Key Open Problems]
Despite substantial progress in the stochastic-geometry analysis of Voronoi and Delaunay tessellations, several fundamental problems remain open:
\begin{enumerate}

\item \emph{Closed-form Poisson--Voronoi volume distributions:}
Deriving explicit PDFs, or provably accurate analytical approximations, for the typical Poisson--Voronoi cell volume, even in the fundamental cases $d=2$ and $d=3$, remains a long-standing challenge. Recent work~\cite{xia2026scaleshape} has established an exact scale--shape factorization: conditional on the effective-facet count, the Voronoi flower volume follows a Gamma distribution and is independent of the normalized shape, yielding exact mixture representations, transform identities, and moment identities. Nevertheless, obtaining a tractable unconditional cell-volume distribution still requires mixing over the unbounded range of the effective-facet count and characterizing the associated normalized shape factors. Developing controlled truncation schemes, explicit error bounds, and unconditional lower-tail asymptotics therefore remains an important open problem.

\item \emph{Dimension-explicit approximation frameworks:}
Systematic approximation schemes are needed for Voronoi volumes that remain accurate across spatial dimensions while preserving exact PPP scaling laws. Existing dimension-dependent approximations may lose accuracy in moderate dimensions or in the distribution tails. A key challenge is therefore to construct systematically improvable models that enforce scaling, moment consistency, and tail fidelity while providing quantifiable error control.

\item \emph{Non-Poisson spatial models:}
Typical-cell volume theory must be extended to clustered, repulsive, and inhomogeneous spatial point processes, as well as to geometrically regular configurations such as Delone sets, lattices, and aperiodic point patterns. Unlike homogeneous PPPs, these models may introduce correlation length scales that cannot be removed by intensity normalization alone. Consequently, approximation methods must account for model-specific interaction structures and be validated across different clustering and repulsion regimes.

\item \emph{Dynamic tessellation theory:}
The temporal evolution of Voronoi and Delaunay tessellations under node mobility remains insufficiently understood, particularly regarding handoff dynamics, cooperative-transmission stability, and network reliability. A central difficulty is that continuous node motion can induce discrete changes in cell boundaries and Delaunay adjacency. Future work should characterize temporal volume correlations, boundary-crossing events, and simplex persistence under realistic mobility models.

\item \emph{Scalable evaluation and data-driven methods:}
Scalable techniques are needed for evaluating complex special-function representations, such as Poisson--Delaunay simplex-volume PDFs expressed through Meijer's $G$-function. For data-driven models, an important unresolved issue is out-of-distribution generalization across spatial dimensions, intensities, and point-process families. Reliable approaches should preserve known geometric constraints, provide uncertainty estimates, and maintain accuracy in the distribution tails, where many wireless-performance metrics are most sensitive.

\end{enumerate}
\end{remark}

The problems highlighted above illustrate that substantial opportunities remain at the intersection of stochastic geometry, high-dimensional probability, and spatial-network modeling. Progress in these directions would not only deepen the mathematical understanding of random spatial tessellations but also provide more accurate and scalable analytical tools for applications ranging from wireless network design to spatial data analysis, computational geometry, and other emerging domains involving large-scale spatial systems.

%

\section{Conclusion}
\label{sec:conclusion}

This paper has presented a structured survey of typical-cell volume distributions in Poisson--Voronoi and Poisson--Delaunay tessellations, emphasizing their analytical foundations, high-dimensional behavior, and relevance to stochastic-geometry-based wireless-network modeling. A central theme is the pronounced analytical asymmetry between these dual tessellation families. Poisson--Voronoi cell volumes admit exact planar integral formulae and, more recently, an exact general-dimensional scale--shape representation with conditional Gamma structure. Nevertheless, the normalized shape laws and unbounded facet-count mixture remain implicit, so practical evaluation of the unconditional distribution still relies largely on simulations, moment identities, and empirical approximations. By contrast, Poisson--Delaunay simplex volumes admit directly evaluable, dimension-explicit PDFs, CDFs, and closed-form moment formulas. More broadly, typical-cell volume distributions provide geometric inputs that connect microscopic spatial randomness with system-level wireless-network models. This survey is intended to serve as a useful reference and stimulate further research on stochastic geometry, random spatial structures, and their wireless-network applications.

%
\appendix

\section{Proof of Corollary~1}
\label{app:1}

Let
\begin{equation}
Y_d \triangleq \log V_d,
\qquad
\widetilde{Y}_d
\triangleq
\frac{Y_d-\mathbb{E}[Y_d]}
{\sqrt{\operatorname{Var}(Y_d)}}.
\end{equation}
The typical Poisson--Delaunay simplex corresponds to the weight parameter
$\mu=-1$ in~\cite{AHL_2021}. For this choice, the expectation and variance
asymptotics are given by
\eqref{eq:delaunay_typical_scale_ahl2021} and
\eqref{eq:delaunay_typical_scale_ahl2021var}, respectively, while
\begin{equation}
\widetilde{Y}_d
\xrightarrow{\mathcal{D}}
\mathcal{N}(0,1)
\end{equation}
follows from~\cite[Th.~1.1(ii)]{AHL_2021}; see also
\cite[Th.~4.4(iv)]{AHL_2021}.

From the definition of $Z_d$,
\[
Z_d
=
\frac{-\log\lambda-Y_d}{d\log d},
\]
and hence
\begin{equation}
\label{eq:Zd_relations}
\mathbb{E}[Z_d]
=
\frac{-\log\lambda-\mathbb{E}[Y_d]}{d\log d},
\quad
\operatorname{Var}(Z_d)
=
\frac{\operatorname{Var}(Y_d)}{d^2\log^2 d}.
\end{equation}

\noindent
(i) Law of Large Numbers:

Equations~\eqref{eq:delaunay_typical_scale_ahl2021}--%
\eqref{eq:delaunay_typical_scale_ahl2021var} and
\eqref{eq:Zd_relations} yield
\begin{align}
\mathbb{E}[Z_d]
&=
\frac12+\mathcal{O}\!\left(\frac{1}{\log d}\right),\\
\operatorname{Var}(Z_d)
&=
\mathcal{O}\!\left(\frac{1}{d^2\log d}\right).
\end{align}
Therefore, $\mathbb{E}[Z_d]\to\tfrac12$ and
$\operatorname{Var}(Z_d)\to0$. Chebyshev's inequality then gives
\[
Z_d\xrightarrow{P}\tfrac12.
\]

\noindent
(ii) Central Limit Theorem:

The definitions of $Y_d$ and $Z_d$ imply the exact identity
\begin{equation}
\label{eq:Zd_standardized}
\frac{Z_d-\mathbb{E}[Z_d]}
{\sqrt{\operatorname{Var}(Z_d)}}
=
-\widetilde{Y}_d.
\end{equation}
The standardized CLT for $\widetilde{Y}_d$, together with the symmetry of
the standard normal distribution, therefore yields
\[
\frac{Z_d-\mathbb{E}[Z_d]}
{\sqrt{\operatorname{Var}(Z_d)}}
\xrightarrow{\mathcal{D}}
\mathcal{N}(0,1).
\]
Furthermore, \eqref{eq:delaunay_typical_scale_ahl2021var} and
\eqref{eq:Zd_relations} give
\[
\sqrt{\operatorname{Var}(Z_d)}
=
\frac{1+o(1)}{d\sqrt{2\log d}}.
\]
Applying Slutsky's theorem \cite[Ch.~3]{Billingsley1999} then gives
\[
\sqrt{2\log d}\,d
\bigl(Z_d-\mathbb{E}[Z_d]\bigr)
\xrightarrow{\mathcal{D}}
\mathcal{N}(0,1).
\]

\noindent
(iii) Moderate Deviation Principle:

For the typical simplex, corresponding to $\mu=-1$,
\cite[Th.~4.4(iii)]{AHL_2021} states that if $a_d\to\infty$ and
$a_d\varepsilon_d^{-1}\to0$, then
$a_d^{-1}\widetilde{Y}_d$ satisfies a moderate deviation principle
with speed $a_d^2$ and rate function $I(x)=x^2/2$. In the fixed-$\mu$
regime, $\varepsilon_d=\sqrt{\log d}$, so the required condition is
equivalent to
\[
a_d=o(\sqrt{\log d}).
\]

Using \eqref{eq:Zd_standardized}, we have
\[
\frac{Z_d-\mathbb{E}[Z_d]}
{a_d\sqrt{\operatorname{Var}(Z_d)}}
=
-\frac{\widetilde{Y}_d}{a_d}.
\]
Since reflection preserves the quadratic rate function $I(x)=x^2/2$, the sequence on the left satisfies the same moderate deviation principle. This completes the proof.

\section*{Acknowledgment}

OpenAI Codex (GPT-5.6 Sol) was used to assist in identifying potentially relevant literature and in the visual preparation of Figs.~1 and 2. It was not used to perform the mathematical analyses or determine the scientific conclusions of this work. All cited literature was independently read, selected, and interpreted by the authors, and the scientific content of Figs.~1 and 2 was independently determined and verified by the authors. The authors take full responsibility for the content of the manuscript.

\section*{Code Availability}
The simulation and plotting code, numerical data, parameter settings, and instructions required to reproduce the numerical results and figures are publicly available at
\href{https://github.com/shitian-0715/Voronoi-volume-PDF-survey}{\nolinkurl{https://github.com/shitian-0715/Voronoi-volume-PDF-survey}}.

\bibliographystyle{IEEEtran.bst}
\bibliography{MyRef.bib}

\begin{thebibliography}{10}
\providecommand{\url}[1]{#1}
\csname url@samestyle\endcsname
\providecommand{\newblock}{\relax}
\providecommand{\bibinfo}[2]{#2}
\providecommand{\BIBentrySTDinterwordspacing}{\spaceskip=0pt\relax}
\providecommand{\BIBentryALTinterwordstretchfactor}{4}
\providecommand{\BIBentryALTinterwordspacing}{\spaceskip=\fontdimen2\font plus
\BIBentryALTinterwordstretchfactor\fontdimen3\font minus
  \fontdimen4\font\relax}
\providecommand{\BIBforeignlanguage}[2]{{%
\expandafter\ifx\csname l@#1\endcsname\relax
\typeout{** WARNING: IEEEtran.bst: No hyphenation pattern has been}%
\typeout{** loaded for the language `#1'. Using the pattern for}%
\typeout{** the default language instead.}%
\else
\language=\csname l@#1\endcsname
\fi
#2}}
\providecommand{\BIBdecl}{\relax}
\BIBdecl

\bibitem{Okabe2000}
A.~Okabe, B.~Boots, K.~Sugihara, and S.~N. Chiu, \emph{Spatial Tessellations:
  Concepts and Applications of Voronoi Diagrams}, 2nd~ed.\hskip 1em plus 0.5em
  minus 0.4em\relax Wiley, 2000.

\bibitem{Schneider2008}
R.~Schneider and W.~Weil, \emph{Stochastic and Integral Geometry}.\hskip 1em
  plus 0.5em minus 0.4em\relax Berlin, Germany: Springer, 2008.

\bibitem{Chiu2013}
S.~N. Chiu, D.~Stoyan, W.~S. Kendall, and J.~Mecke, \emph{Stochastic Geometry
  and Its Applications}, 3rd~ed.\hskip 1em plus 0.5em minus 0.4em\relax
  Chichester, UK: Wiley, 2013.

\bibitem{pournin2012voronoi}
L.~Pournin and T.~M. Liebling, ``Voronoi diagrams and delaunay triangulations:
  Ubiquitous siamese twins,'' \emph{Documenta Math.}, vol. {Extra Volume ISMP},
  pp. 419--431, 2012.

\bibitem{Hjelle2006}
{\O}.~Hjelle and M.~D{\ae}hlen, \emph{Triangulations and Applications}.\hskip
  1em plus 0.5em minus 0.4em\relax Berlin, Germany: Springer-Verlag, 2006.

\bibitem{Baccelli2009}
F.~Baccelli and B.~B{\l}aszczyszyn, \emph{Stochastic Geometry and Wireless
  Networks, Volume I: Theory}, ser. Foundations and Trends in Networking.\hskip
  1em plus 0.5em minus 0.4em\relax Hanover, MA, USA: Now Publishers, 2009,
  vol.~3, no. 3--4.

\bibitem{Baccelli2010}
------, \emph{Stochastic Geometry and Wireless Networks, Volume II:
  Applications}, ser. Foundations and Trends in Networking.\hskip 1em plus
  0.5em minus 0.4em\relax Hanover, MA, USA: Now Publishers, 2010, vol.~4, no.
  1--2.

\bibitem{Haenggi2012}
M.~Haenggi, \emph{Stochastic Geometry for Wireless Networks}.\hskip 1em plus
  0.5em minus 0.4em\relax Cambridge University Press, 2012.

\bibitem{6042301}
J.~G. Andrews, F.~Baccelli, and R.~K. Ganti, ``A tractable approach to coverage
  and rate in cellular networks,'' \emph{IEEE Trans. Commun.}, vol.~59, no.~11,
  pp. 3122--3134, Nov. 2011.

\bibitem{8358978}
M.~Xia and S.~A{\"i}ssa, ``Unified analytical volume distribution of
  {Poisson--Delaunay} simplex and its application to coordinated multi-point
  transmission,'' \emph{IEEE Trans. Wireless Commun.}, vol.~17, no.~7, pp.
  4912--4921, Jul. 2018.

\bibitem{10720695}
Y.~Li, D.~Guo, L.~Luo, and M.~Xia, ``Air-to-ground communications beyond {5G}:
  {CoMP} handoff management in {UAV} network,'' \emph{IEEE Trans. Wireless
  Commun.}, vol.~23, no.~12, pp. 18\,822--18\,837, Dec. 2024.

\bibitem{11346534}
X.~Fan, W.~Wen, P.~Wu, J.~Zhao, and M.~Xia, ``Air-to-ground communications for
  {I}nternet of {T}hings: {UAV}-based coverage hole detection and recovery,''
  \emph{IEEE Internet Things J.}, vol.~13, no.~6, pp. 12\,228--12\,243, Mar.
  2026.

\bibitem{11311373}
T.~Shi, W.~Wen, P.~Wu, and M.~Xia, ``Vertical heterogeneous networks beyond
  {5G}: {CoMP} coverage enhancement and optimization,'' \emph{IEEE Trans.
  Wireless Commun.}, vol.~25, pp. 9391--9405, 2026.

\bibitem{10531691}
Y.~He, Z.~Li, and Y.~Chen, ``Tractable modeling and performance analysis of
  low-earth orbit satellite constellations,'' \emph{IEEE Internet Things J.},
  vol.~11, no.~17, pp. 28\,297--28\,306, Sep. 2024.

\bibitem{11159552}
C.-S. Choi and F.~Baccelli, ``Stochastic geometry and dynamical system analysis
  of {Walker} satellite constellations,'' \emph{IEEE Trans. Veh. Technol.},
  vol.~75, no.~3, pp. 5127--5132, Mar. 2026.

\bibitem{8533634}
M.~Mozaffari, A.~Taleb Zadeh~Kasgari, W.~Saad, M.~Bennis, and M.~Debbah,
  ``Beyond {5G} with {UAVs}: Foundations of a {3D} wireless cellular network,''
  \emph{IEEE Trans. Wireless Commun.}, vol.~18, no.~1, pp. 357--372, Jan. 2019.

\bibitem{9151343}
Y.~Li, N.~I. Miridakis, T.~A. Tsiftsis, G.~Yang, and M.~Xia, ``Air-to-air
  communications beyond {5G}: A novel {3D} {CoMP} transmission scheme,''
  \emph{IEEE Trans. Wireless Commun.}, vol.~19, no.~11, pp. 7324--7338, Nov.
  2020.

\bibitem{Calka2003}
P.~Calka, ``Precise formulae for the distributions of the principal geometric
  characteristics of the typical cells of a two-dimensional {Poisson--Voronoi}
  tessellation and a {Poisson} line process,'' \emph{Adv. Appl. Probab.},
  vol.~35, no.~3, pp. 551--562, 2003.

\bibitem{xia2026scaleshape}
\BIBentryALTinterwordspacing
T.~Shi and M.~Xia, ``An exact scale--shape factorization of the typical
  {Poisson--Voronoi} cell volume,'' \emph{arXiv preprint arXiv:2607.20680},
  2026. [Online]. Available: \url{https://arxiv.org/abs/2607.20680}
\BIBentrySTDinterwordspacing

\bibitem{Rathie1992}
P.~N. Rathie, ``On the volume distribution of the typical {Poisson--Delaunay}
  cell,'' \emph{J. Appl. Prob.}, vol.~29, no.~3, pp. 740--744, 1992.

\bibitem{Muche1996}
L.~Muche, ``Distributional properties of the three-dimensional
  poisson–delaunay cell,'' \emph{J. Stat. Phys.}, vol.~84, no. 1--2, pp.
  147--167, 1996.

\bibitem{Mishra2023}
A.~Mishra and F.~C. Motta, ``Stability and machine learning applications of
  persistent homology using the {Delaunay--Rips} complex,'' \emph{Front. Appl.
  Math. Stat.}, vol.~9, p. 1179301, 2023.

\bibitem{Zhao2024}
A.~Zhao, X.~Zhao, T.~Gu, Z.~Bi, X.~Sun, C.~Yan, F.~Yang, D.~Zhou, and X.~Zeng,
  ``Exploring high-dimensional search space via {Voronoi} graph traversing,''
  in \emph{Proc. Fortieth Conf. Uncertainty Artificial Intelligence}, ser.
  Proc. Machine Learning Research, vol. 244, 2024, pp. 4219--4236.

\bibitem{Sikorski2024}
\BIBentryALTinterwordspacing
A.~Sikorski and M.~Heida, ``{Voronoi} graph -- improved raycasting and
  integration schemes for high dimensional {Voronoi} diagrams,'' 2024,
  arXiv:2405.10050. [Online]. Available: \url{https://arxiv.org/abs/2405.10050}
\BIBentrySTDinterwordspacing

\bibitem{AHL_2021}
A.~Gusakova and C.~Th{\"a}le, ``The volume of simplices in high-dimensional
  {Poisson--Delaunay} tessellations,'' \emph{Ann. Henri Lebesgue}, vol.~4, pp.
  121--153, 2021.

\bibitem{irlbeck2025pvtd}
\BIBentryALTinterwordspacing
M.~Irlbeck, Z.~Kabluchko, and T.~M{\"u}ller, ``On the shape of the typical
  {P}oisson--{V}oronoi cell in high dimensions,'' \emph{arXiv preprint}, 2025,
  arXiv:2506.02607. [Online]. Available: \url{https://arxiv.org/abs/2506.02607}
\BIBentrySTDinterwordspacing

\bibitem{kiang1966random}
T.~Kiang, ``Random fragmentation in two and three dimensions,'' \emph{Z.
  Astrophys.}, vol.~64, pp. 433--433, 1966.

\bibitem{hinde1980monte}
A.~L. Hinde and R.~E. Miles, ``Monte {C}arlo estimates of the distributions of
  the random polygons of the {V}oronoi tessellation with respect to a {P}oisson
  process,'' \emph{J. Stat. Comput. Simul.}, vol.~10, no. 3--4, pp. 205--223,
  1980.

\bibitem{tanemura2003statistical}
M.~Tanemura, ``Statistical distributions of {P}oisson {V}oronoi cells in two
  and three dimensions,'' \emph{Forma}, vol.~18, no.~4, pp. 221--247, 2003.

\bibitem{weaire1986distribution}
D.~Weaire, J.~P. Kermode, and J.~Wejchert, ``On the distribution of cell areas
  in a {V}oronoi network,'' \emph{Philos. Mag. B}, vol.~53, no.~5, pp.
  L101--L105, 1986.

\bibitem{ferenc2007size}
J.-S. Ferenc and Z.~N{\'e}da, ``On the size distribution of {P}oisson {V}oronoi
  cells,'' \emph{Physica A}, vol. 385, no.~2, pp. 518--526, 2007.

\bibitem{hayen2002areas}
A.~Hayen and M.~P. Quine, ``Areas of components of a {V}oronoi polygon in a
  homogeneous {P}oisson process in the plane,'' \emph{Adv. Appl. Probab.},
  vol.~34, no.~2, pp. 281--291, Jun. 2002.

\bibitem{MucheBallani2011}
L.~Muche and F.~Ballani, ``The second volume moment of the typical cell and
  higher moments of edge lengths of the spatial {P}oisson--{V}oronoi
  tessellation,'' \emph{Monatsh. Math.}, vol. 163, no.~1, pp. 71--80, 2011.

\bibitem{Calka_2002}
P.~Calka, ``The distributions of the smallest disks containing the
  {P}oisson--{V}oronoi typical cell and the {C}rofton cell in the plane,''
  \emph{Adv. Appl. Probab.}, vol.~34, no.~4, pp. 702--717, 2002.

\bibitem{Moller1994}
J.~M{\o}ller, \emph{Lectures on Random Voronoi Tessellations}, ser. Lecture
  Notes in Statistics.\hskip 1em plus 0.5em minus 0.4em\relax New York, NY,
  USA: Springer, 1994, vol.~87.

\bibitem{SCHULTE2012285}
M.~Schulte, ``A central limit theorem for the {P}oisson--{V}oronoi
  approximation,'' \emph{Adv. Appl. Math.}, vol.~49, no.~3, pp. 285--306,
  Sep./Oct. 2012.

\bibitem{Koufos2019}
K.~Koufos and C.~P. Dettmann, ``Distribution of cell area in bounded {P}oisson
  {V}oronoi tessellations with application to secure local connectivity,''
  \emph{J. Stat. Phys.}, vol. 176, no.~5, pp. 1296--1315, Jun. 2019.

\bibitem{MILES197085}
R.~E. Miles, ``On the homogeneous planar {P}oisson point process,'' \emph{Math.
  Biosci.}, vol.~6, pp. 85--127, 1970.

\bibitem{Kumar1992}
S.~Kumar, S.~K. Kurtz, J.~R. Banavar, and M.~G. Sharma, ``Properties of a
  three-dimensional {Poisson--Voronoi} tessellation: A {Monte Carlo} study,''
  \emph{J. Stat. Phys.}, vol.~67, no. 3--4, pp. 523--551, 1992.

\bibitem{Lazar2013}
E.~A. Lazar, J.~K. Mason, R.~D. MacPherson, and D.~J. Srolovitz, ``Statistical
  topology of three-dimensional poisson-voronoi cells and cell boundary
  networks,'' \emph{Phys. Rev. E}, vol.~88, p. 063309, Dec 2013.

\bibitem{Zuyev1992}
S.~A. Zuyev, ``Estimates for distributions of the {Voronoi} polygon's geometric
  characteristics,'' \emph{Random Struct. Algorithms}, vol.~3, no.~2, pp.
  149--162, 1992.

\bibitem{Calka2005}
P.~Calka and T.~Schreiber, ``Limit theorems for the typical {Poisson--Voronoi}
  cell and the {Crofton} cell with a large inradius,'' \emph{Ann. Probab.},
  vol.~33, no.~4, pp. 1625--1642, 2005.

\bibitem{calka2025voronoitail}
\BIBentryALTinterwordspacing
P.~Calka, C.~D’Errico, and N.~Enriquez, ``Sharp asymptotics for the maximal
  distance from the boundary to the nucleus of a typical {Poisson--Voronoi}
  cell,'' \emph{arXiv preprint}, 2025, arXiv:2511.11189. [Online]. Available:
  \url{https://arxiv.org/abs/2511.11189}
\BIBentrySTDinterwordspacing

\bibitem{HugReitznerSchneider2004}
D.~Hug, M.~Reitzner, and R.~Schneider, ``Large {Poisson--Voronoi} cells and
  {Crofton} cells,'' \emph{Adv. Appl. Probab.}, vol.~36, no.~3, pp. 667--690,
  2004.

\bibitem{Mathai93}
A.~M. Mathai, \emph{A Handbook of Generalized Special Functions for Statistics
  and Physical Sciences}.\hskip 1em plus 0.5em minus 0.4em\relax Oxford
  University Press, 1993.

\bibitem{dicenzo1989monte}
S.~B. DiCenzo and G.~K. Wertheim, ``Monte {C}arlo calculation of the size
  distribution of supported clusters,'' \emph{Phys. Rev. B}, vol.~39, no.~10,
  pp. 6792--6792, Apr. 1989.

\bibitem{akhiezer2020classical}
N.~I. Akhiezer, \emph{The Classical Moment Problem and Some Related Questions
  in Analysis}.\hskip 1em plus 0.5em minus 0.4em\relax SIAM, 2020.

\bibitem{8976426}
Y.~Li, M.~Xia, and S.~A{\"i}ssa, ``Coordinated multi-point transmission: A
  {P}oisson--{D}elaunay triangulation based approach,'' \emph{IEEE Trans.
  Wireless Commun.}, vol.~19, no.~5, pp. 2946--2959, May 2020.

\bibitem{Alishahi_Sharifitabar_2008}
K.~Alishahi and M.~Sharifitabar, ``Volume degeneracy of the typical cell and
  the chord length distribution for {Poisson-Voronoi} tessellations in high
  dimensions,'' \emph{Adv. Appl. Probab.}, vol.~40, no.~4, p. 919–938, 2008.

\bibitem{9079449}
C.~Saha and H.~S. Dhillon, ``Load on the typical poisson {V}oronoi cell with
  clustered user distribution,'' \emph{IEEE Wireless Commun. Lett.}, vol.~9,
  no.~9, pp. 1361--1365, Sep. 2020.

\bibitem{8515110}
G.~George, A.~Lozano, and M.~Haenggi, ``Distribution of the number of users per
  base station in cellular networks,'' \emph{IEEE Wireless Commun. Lett.},
  vol.~8, no.~2, pp. 520--523, Apr. 2019.

\bibitem{7389350}
C.-H. Liu and L.-C. Wang, ``Optimal cell load and throughput in green small
  cell networks with generalized cell association,'' \emph{IEEE J. Sel. Areas
  Commun.}, vol.~34, no.~5, pp. 1058--1072, May 2016.

\bibitem{8437232}
H.~Wu, X.~Tao, N.~Zhang, and X.~Shen, ``Cooperative {UAV} cluster-assisted
  terrestrial cellular networks for ubiquitous coverage,'' \emph{IEEE J. Sel.
  Areas Commun.}, vol.~36, no.~9, pp. 2045--2058, Sep. 2018.

\bibitem{11361371}
X.~Chen, J.~Liu, M.~Sheng, and J.~Li, ``Transmissive {RIS}-enabled simultaneous
  coverage for aerial and ground users in cellular networks,'' \emph{IEEE
  Trans. Wireless Commun.}, vol.~25, pp. 10\,576--10\,588, 2026.

\bibitem{9893878}
Y.~Li and M.~Xia, ``Ground-to-air communications beyond {5G}: A coordinated
  multipoint transmission based on {Poisson--Delaunay} triangulation,''
  \emph{IEEE Trans. Wireless Commun.}, vol.~22, no.~3, pp. 1841--1854, Mar.
  2023.

\bibitem{marchand2024cox}
D.~C. Marchand, D.~Coupier, and B.~Henry, ``Line-of-sight {Cox} percolation on
  {Poisson–Delaunay} triangulation,'' \emph{Stochastic Process. Appl.}, vol.
  176, p. 104435, Oct. 2024.

\bibitem{10381632}
X.~Fan, P.~Wu, and M.~Xia, ``Air-to-ground communications beyond {5G}: {UAV}
  swarm formation control and tracking,'' \emph{IEEE Trans. Wireless Commun.},
  vol.~23, no.~7, pp. 8029--8043, Jul. 2024.

\bibitem{6576422}
S.~M. Yu and S.-L. Kim, ``Downlink capacity and base station density in
  cellular networks,'' in \emph{Proc. Int. Symp. Model. Optim. Mobile, Ad Hoc,
  Wireless Netw. (WiOpt)}, 2013, pp. 119--124.

\bibitem{6497002}
S.~Singh and J.~G. Andrews, ``Offloading in heterogeneous networks: Modeling,
  analysis, and design insights,'' \emph{IEEE Trans. Wireless Commun.},
  vol.~12, no.~5, pp. 2484--2497, May 2013.

\bibitem{6684550}
S.~Singh, H.~S. Dhillon, and J.~G. Andrews, ``Joint resource partitioning and
  offloading in heterogeneous cellular networks,'' \emph{IEEE Trans. Wireless
  Commun.}, vol.~13, no.~2, pp. 888--901, Feb. 2014.

\bibitem{7593259}
J.~G. Andrews, T.~Bai, M.~N. Kulkarni, A.~Alkhateeb, A.~K. Gupta, and R.~W.
  Heath, ``Modeling and analyzing millimeter wave cellular systems,''
  \emph{IEEE Trans. Commun.}, vol.~65, no.~1, pp. 403--430, Jan. 2017.

\bibitem{9625443}
M.~S. Zia, D.~M. Blough, and M.~A. Weitnauer, ``On the effects of blockage on
  load modeling in millimeter-wave cellular networks,'' in \emph{Proc. Proc.
  IEEE Veh. Technol. Conf. (VTC'2021-Fall)}, 2021, pp. 1--5.

\bibitem{9676414}
J.-H. Noh, B.~Lee, and S.-J. Oh, ``User-number threshold-based base station
  on/off control for maximizing coverage probability,'' \emph{IEEE Trans. Veh.
  Technol.}, vol.~71, no.~3, pp. 3214--3228, Mar. 2022.

\bibitem{5995290}
P.~C. Pinto, J.~Barros, and M.~Z. Win, ``Secure communication in stochastic
  wireless networks—{Part} {I}: Connectivity,'' \emph{IEEE Trans. Inf.
  Forensics Security}, vol.~7, no.~1, pp. 125--138, Feb. 2012.

\bibitem{6512533}
H.~Wang, X.~Zhou, and M.~C. Reed, ``Physical layer security in cellular
  networks: A stochastic geometry approach,'' \emph{IEEE Trans. Wireless
  Commun.}, vol.~12, no.~6, pp. 2776--2787, Jun. 2013.

\bibitem{9756566}
T.-X. Zheng, X.~Chen, C.~Wang, K.-K. Wong, and J.~Yuan, ``Physical layer
  security in large-scale random multiple access wireless sensor networks: A
  stochastic geometry approach,'' \emph{IEEE Trans. Commun.}, vol.~70, no.~6,
  pp. 4038--4051, Jun. 2022.

\bibitem{9775573}
Z.~Mohamed, A.~Bhowal, and S.~A{\"i}ssa, ``Distance distributions and coverage
  probabilities in {P}oisson--{D}elaunay triangular cells with application to
  coordinated multipoint wireless power transfer,'' \emph{IEEE Trans. Wireless
  Commun.}, vol.~21, no.~11, pp. 9143--9154, Nov. 2022.

\bibitem{Lagarias1999}
J.~C. Lagarias, ``Geometric models for quasicrystals {I}. {Delone} sets of
  finite type,'' \emph{Discrete Comput. Geom.}, vol.~21, no.~2, pp. 161--191,
  1999.

\bibitem{BaakeGrimm2013}
M.~Baake and U.~Grimm, \emph{Aperiodic Order. Volume 1: A Mathematical
  Invitation}.\hskip 1em plus 0.5em minus 0.4em\relax Cambridge, U.K.:
  Cambridge Univ. Press, 2013.

\bibitem{Billingsley1999}
P.~Billingsley, \emph{Convergence of Probability Measures}, 2nd~ed.\hskip 1em
  plus 0.5em minus 0.4em\relax Wiley, 1999.

\end{thebibliography}

\begin{IEEEbiography}
	[{\includegraphics[width=1in, height=1.25in, clip, keepaspectratio]{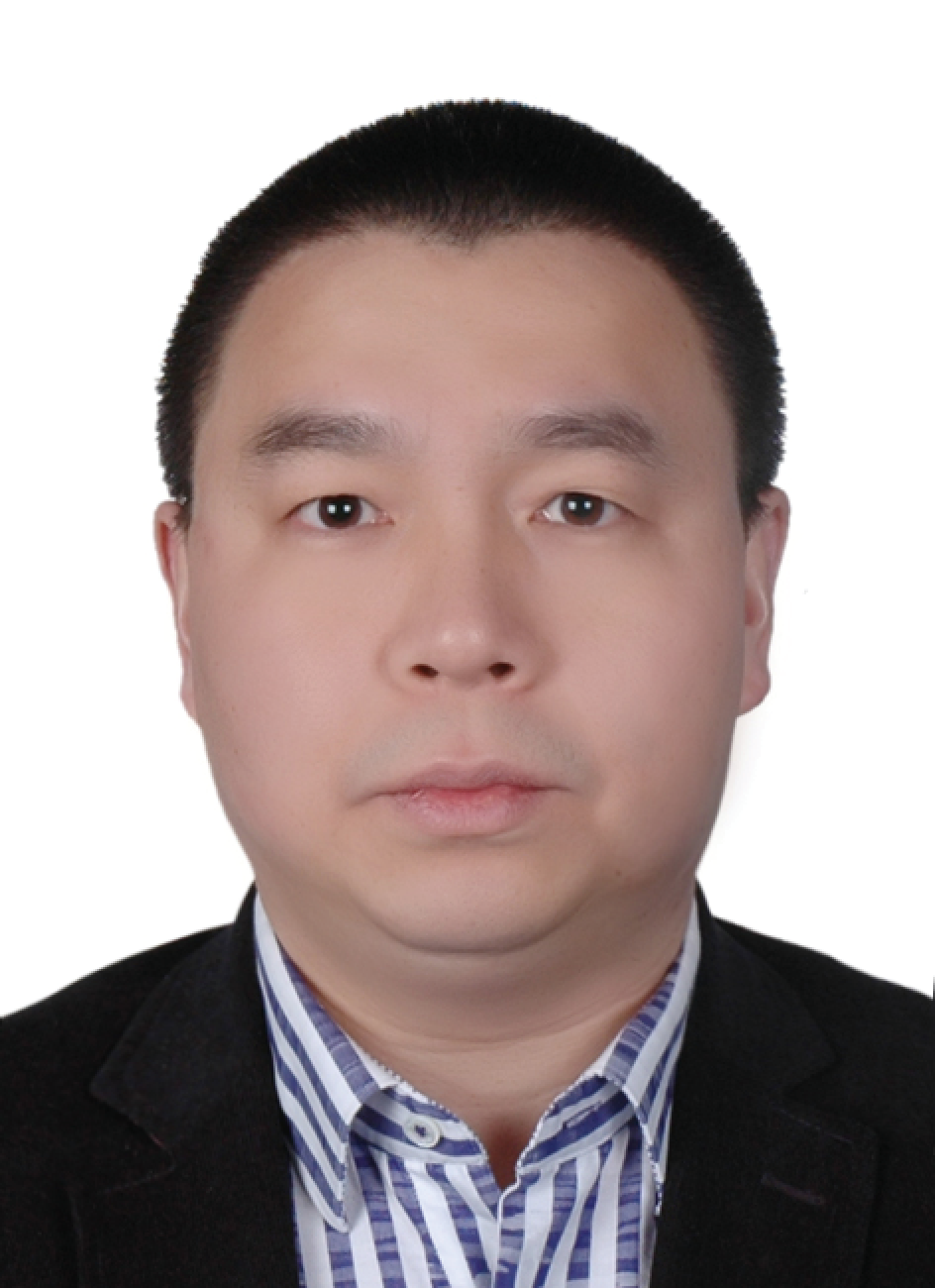}}]{Minghua Xia} (Senior Member, IEEE) received the Ph.D. degree in Telecommunications and Information Systems from Sun Yat-sen University, Guangzhou, China, in 2007.
	
	From 2007 to 2009, he was with the Electronics and Telecommunications Research Institute (ETRI) of South Korea, Beijing R\&D Center, Beijing, China, where he worked as a member and then as a senior member of the engineering staff. From 2010 to 2014, he was in sequence with The University of Hong Kong, Hong Kong, China; King Abdullah University of Science and Technology, Jeddah, Saudi Arabia; and the Institut National de la Recherche Scientifique (INRS), University of Quebec, Montreal, Canada, as a Postdoctoral Fellow. Since 2015, he has been a Professor at Sun Yat-sen University. Since 2019, he has also been an Adjunct Professor with the Southern Marine Science and Engineering Guangdong Laboratory (Zhuhai). His research interests are in the general areas of wireless communications and signal processing.
\end{IEEEbiography}

\begin{IEEEbiography}
	[{\includegraphics[width=1in,height=1.25in, clip, keepaspectratio]{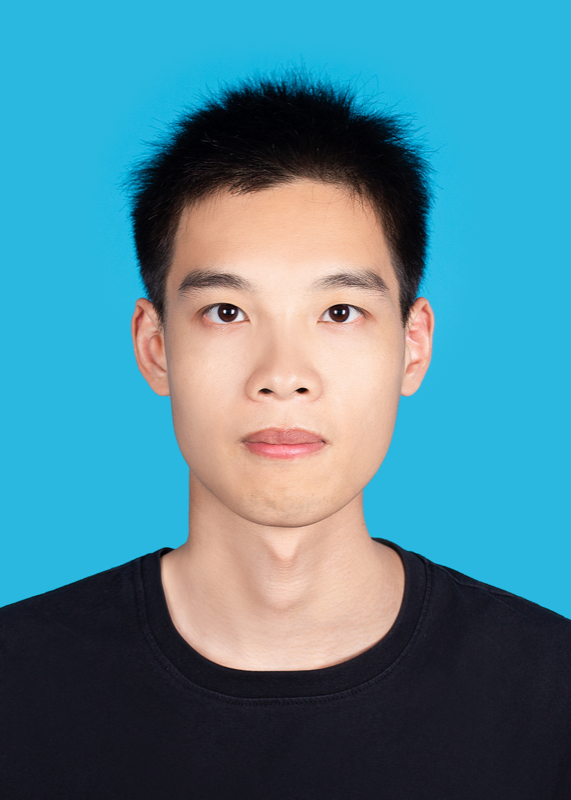}}]{Tian Shi} received the B.Sc. degree in information and computing science in 2019 and the M.Sc. Degree in mathematics in 2023, both from Guilin University of Electronic Technology, Guilin, China. He is currently pursuing a Ph.D. in information and communication engineering at Sun Yat-sen University, Guangzhou, China. His research interests include stochastic geometry and cooperative UAV communications.
\end{IEEEbiography}

\begin{IEEEbiography}
	[{\includegraphics[width=1in, height=1.25in, clip, keepaspectratio]{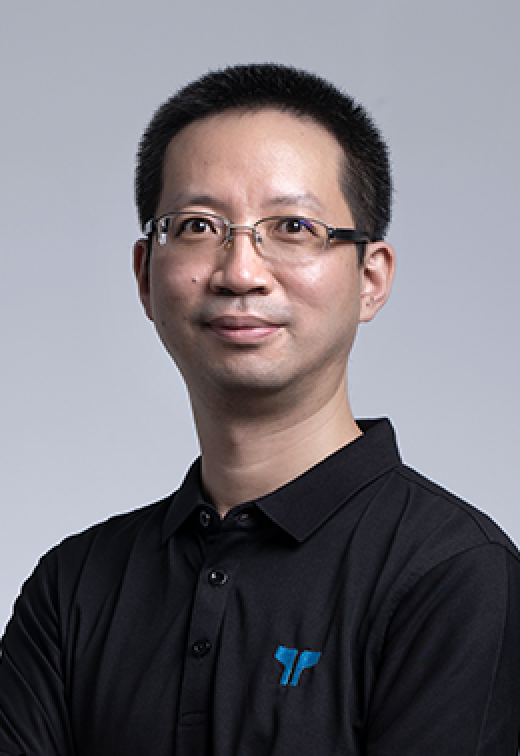}}]{Wenkun Wen} (Member, IEEE) received the Ph.D. degree in Telecommunications and Information Systems from Sun Yat-sen University, Guangzhou, China, in 2007. Since 2020, he has been with Techphant Technologies Co. Ltd., Guangzhou, China, as Chief Engineer.

From 2008 to 2009, he was with the Guangdong-Nortel R\&D center in Guangzhou, China, where he worked as a system engineer on 4G systems. From 2009 to 2012, he worked at the LTE R\&D center of New Postcom Equipment Co. Ltd., Guangzhou, China, where he served as the 4G standard team manager. From 2012 to 2018, he was with the 7th Institute of China Electronic Technology Corporation (CETC) as an expert in wireless communications. From 2018 to 2020, he served as Deputy Director of the 5G Innovation Center at CETC. His research interests include 5G/B5G mobile communications, machine-type communications, narrow-band wireless communications, and signal processing.
\end{IEEEbiography}	
\vfil
\balance
\EOD
\end{document}